\newcommand{\RR}{\mathbb R}
\newcommand{\ZZ}{\mathbb Z}
\DeclareMathOperator{\conv}{conv}
\DeclareMathOperator{\Cay}{Cay}
\DeclareMathOperator{\Span}{span}
\DeclareMathOperator{\lft}{lift}
\newcommand{\lift}[1]{\lft(#1)}
\def\D{{\mathcal D}}
\def\Z{{\mathbb Z}}
\def\R{{\mathbb R}}
\newtheorem{theorem}{Theorem}[section]
\newtheorem{lemma}[theorem]{Lemma}
\newtheorem{corollary}[theorem]{Corollary}
\newtheorem{proposition}[theorem]{Proposition}
\newtheorem{conjecture}[theorem]{Conjecture}
  \theoremstyle{definition}
\newtheorem{definition}[theorem]{Definition}
\newtheorem{example}{Example}
\begin{document}

\author{Ngoc Mai Tran}
\address{Department of Mathematics, University of Texas at Austin, Texas TX 78712, USA and 
Hausdorff Center for Mathematics, Bonn 53115, Germany}
\email{ntran@math.utexas.edu}

\author{Josephine Yu}
\address{School of Mathematics, Georgia Institute of Technology, Atlanta GA 30332, USA}
 \email{jyu@math.gatech.edu}

\begin{abstract}
In a recent and ongoing work, Baldwin and Klemperer explored a connection between tropical geometry and economics. They gave a sufficient condition for the existence of competitive equilibrium in product-mix auctions of indivisible goods.  This result, which we call the Unimodularity Theorem, can also be traced back to the work of Danilov, Koshevoy, and Murota in discrete convex analysis. We give a new proof of the Unimodularity Theorem via the classical unimodularity theorem in integer programming. 
We give a unified treatment of these results via tropical geometry and formulate a new sufficient condition for competitive equilibrium when there are only two types of product. Generalizations of our theorem in higher dimensions are equivalent to various forms of the Oda conjecture in algebraic geometry.
\end{abstract}
\keywords{tropical geometry, product-mix auction, auction theory, set-packing, integer program, unimodularity, discrete convexity, Oda conjecture, Smooth Polytope conjecture, normal polytopes, integer decomposition property}
%\MSCCLASS{}
%\ORMSCLASS{Primary: ; secondary: }
%\HISTORY{}
\title{Product-Mix Auctions and Tropical Geometry}
\maketitle

%\linenumbers
%introduction: 
%product-mix auction. main result: unimodularity theorem. tropical geometry: effective language. 
%contribution of this paper: a/ product-mix is a type of combinatorial auction. linear programm, classical unimodular theorem  in IP/LP.
%b/ Generalizations of the unimodularity theorem - connections to the smooth polytope conjecture. Easy to see applications of the theorem. Example: type A auction. 

\section{Introduction.}
An auction is a mechanism to sell goods.  It is a process that takes in a collection of bids and produces a set of winners, the amount of goods each winner gets, and the prices the winners have to pay. A product-mix auction is a sealed bid, static auction for differentiated  goods. That is, no one knows others' bids, each agent submits the bids only once, and there are different types of goods.
 The goods are said to be {\em indivisible} if they can only be sold in integer quantities, and this is the case that interests us. Each agent (bidder) indicates her valuation, or how much each {\em bundle} of goods is worth to her. After seeing the bids, the economist sets the per-unit price for each type of good, splits up the supply into possibly empty bundles, one for each agent. Each agent pays for her assigned bundle according to the set prices. A {\em competitive equilibrium} exists if there is a choice of prices and a way to exactly divide up the supply so that each agent is assigned a bundle that maximizes her utility. Competitive equilibrium is seen as a desirable, satisfaction-for-all outcome. Thus, the central economic question is to find conditions under which competitive equilibrium always exists, and if possible, is easy to compute. 

The product-mix auction was first proposed by Klemperer \cite{klemperer08, klemperer2010product} at the heels of the 2007 Bank of England crisis. As a game, its strength is in its simplicity. In particular, if competitive equilibrium exists, then agents who want to maximize their utility should bid their true valuation of the bundles, and this is a desirable economic outcome. The product-mix auction subsumes many models as special cases \cite{BaldwinKlemperer}, while itself being a special case of multi-unit combinatorial auction \cite{de2003combinatorial}.
Thus, the literature related to product-mix auction spreads across several fields, from economics to optimization to discrete convex analysis. Each field has its own language, central questions and notations, and consequently, discoveries are often repeated. Competitive equilibrium in product-mix auction is an instance of general equilibrium, more specifically, Walrasian equilibrium. This classical economic model dates back to Walras, a French mathematician in the late 19th century. However, in a product-mix auction with indivisible goods, convex analysis and fixed point techniques often used in general equilibrium theory do not immediately apply. 

Danilov, Koshevoy and Murota took this view in considering Walrasian economies with indivisible goods \cite{DKM} and searched for conditions where the discrete analogue of classical fixed point arguments still work. They obtained the Unimodularity Theorem \cite[Theorem 4]{DKM} in 2001, although its proof appears three years later as a special case of \cite[Theorem 3]{danilov2004discrete}. A version of this theorem surfaces independently in the algebraic geometry community, with Howard \cite{howard} rediscovering what is essentially \cite[Theorem 3]{danilov2004discrete} in 2007. Then in 2012, the Unimodularity Theorem was once again independently discovered by Baldwin and Klemperer in~\cite{BaldwinKlemperer12} in the form and context stated here, whose tropical geometric proof we give in Section~\ref{sec:unimodularity} below employs the same key argument as \cite{howard, danilov2004discrete}. Since then, Baldwin and Klemperer have used tropical geometry to discover other sufficient conditions for competitive equilibrium in product-mix auctions, most notably the Intersection Count Theorem \cite[Theorem 5.12]{BaldwinKlemperer}. Their novel lattice-counting approach is complementary to the results discussed here, see \cite{BaldwinKlemperer} for further developments.

Our paper has two contributions. First, we rewrite product-mix auction as a particular integer program, cast competitive equilibrium as a linear versus integer programming question, and show how the Unimodularity Theorem follows from the usual unimodularity theorem in integer programming. It differs from the proof of \cite{DKM,danilov2004discrete,BaldwinKlemperer,howard}, where one needs to find classes of polytopes in which the operations of taking Minkowski sums and taking integer points commute. Second, we give a unified treatment of competitive equilibrium in product-mix auctions, with self-contained proofs from all the perspectives considered in the literature: discrete convex analysis, tropical geometry, and integer programming. Utilizing connections to lattice polytopes, we give a new family of product-mix auctions where competitive equilibrium is guaranteed to hold (cf. Theorem~\ref{thm:n2}). Generalizations of this result are  equivalent to various forms of the Oda Conjecture in algebraic geometry. Our setup of Theorem \ref{thm:n2} is as far as possible from transverse intersections, which are the focus of the developments in \cite{BaldwinKlemperer}, and thus appears to be new.  

Unlike \cite{BaldwinKlemperer}, whose audience mainly consists of economists, this paper is an exposition on product-mix auctions for a mathematical audience with minimal prior knowledge on auction theory. Following \cite{BaldwinKlemperer}, we choose to state the product-mix auction in terms of tropical geometry, as it introduces an elegant language and key geometric insights. The paper is organized as follows. After defining the product-mix auctions in~\S\ref{sec:auctions}, we make connections to tropical geometry in~\S\ref{sec:tropical} and give a simple proof to the Unimodularity Theorem in~\S\ref{sec:unimodularity}. Our integer programming approach to competitive equilibrium appears in \S\ref{sec:LPIP} where we give a new proof of the Unimodularity Theorem.  We discuss other integer programming formulations in \S\ref{sec:lpip.cayley}, and connections to subset sum  in \S\ref{sec:polytime}. Finally in \S\ref{sec:oda} we make connections to the study lattice polytopes arising from toric geometry, such as the Oda conjecture.

\vskip12pt
\noindent \textbf{Notations.} For a set $A \subset \R^n$, let $\conv(A)\subset \R^n $ denote its convex hull, $\conv_\Z(A) = conv(A) \cap \Z^n$ denote its integer convex hull, $|A|$ denote its cardinality. 
\vskip12pt \noindent
\textbf{Acknowledgments.} 
This project came together thanks to a workshop organized by Simona Settepanella in 2014, and a talk request by Matt Baker. We thank Elizabeth Baldwin, Anton Leykin, Bernd Sturmfels, Gleb Koshevoy and Rakesh Vohra for interesting discussions, Luis Bosc\'an, Paul Klemperer, Charles Wang, and especially Elizabeth Baldwin for comments on earlier versions, and the anonymous referees for their careful readings and helpful suggestions. Ngoc Tran was supported by an award from the Simons Foundation (\#197982 to the University of Texas at Austin) and the Bonn Junior Fellowship at the Hausdorff Center for Mathematics. Josephine Yu was supported by NSF DMS grants \#1101289 and \#1600569.

\section{Product-Mix Auctions.}
\label{sec:auctions}

Suppose there are $n$ types of indivisible goods. A \emph{good bundle} is a point in $\Z^n$. For $j = 1, \ldots, J$, the $j$-th agent gives the economist her valuation function (bids)
$$
u^j: A^j \to \R, 
$$
where $A^j \subset \ZZ^n$ and $u^j(a)$ is her bid for bundle $a \in A^j$. Here negative coordinates (with possibly negative valuations) mean that the agent wants to sell the products.  In this setup the sellers and buyers play the same role. 

The Minkowski sum 
$$A = \sum_{j=1}^JA^j := \left\{\sum_{j=1}^J a^j \mid a^j \in A^j \text{ for each } j=1,\dots,J \right\}$$ 
is the set of all possible good bundles in the economy that can potentially be matched to this set of agents. For a fixed price vector $p \in \R^n$, where $p_i$ is the price for one unit of the $i$-th good, the {\em utility} of the $j$-th agent buying bundle $a$ at price $p$ is $u^j(a)-p \cdot a$.
The {\em demand set} of the $j$-th agent at price $p$ is the set of bundles that maximizes her utility
\begin{equation}
\label{eq:profit}
D_{u^j}(p) := \arg\max_{a \in A^j}\{u^j(a) - p\cdot a\}.
\end{equation}
The aggregate valuation function $U: A \to \R^n$ is the maximum total valuation taken over all ways to partition each bundle $a \in A$:
\begin{equation}\label{eqn:U}
U(a) := \max\left\{\sum_{j=1}^J u^j(a^j) \mid a^j \in A^j \text{ and } \sum_{j = 1}^J a^j = a \right\}. 
\end{equation}
The aggregate demand at $p$ is the demand set of the aggregate valuation $U$
$$ D_U(p) :=  \arg\max_{a \in A} \{ U(a) - p \cdot a\}.$$  
One can check that it is the Minkowski sum of the individual demands
$$ D_U(p) = \sum_{j=1}^J D_{u^j}(p) \subseteq A. $$
The input of a product-mix auction of indivisible goods consists of the valuation functions $\{u^j \mid j = 1, \ldots, J\}$ and a fixed supply bundle $a$. 
The economist needs to find a price vector $p \in \R^n$ so that $a \in D_U(p)$. If such a price exists, we say that a {\em competitive equilibrium exists at~$a$}.

\begin{definition}
The set $\{u^j\}$ of valuation functions has \emph{competitive equilibrium} at $a\in\conv_\Z(A)$ if there exists a price $p \in \R^n$ such that $a \in D_U(p)$.  We say that \emph{competitive equilibrium exists} if it exists for every $a \in \conv_\Z(A)$.
\end{definition}

When the valuations are all constant, competitive equilibrium exists at $a \in \conv_\Z(A)$ if and only if $a \in A_1 + \cdots + A_J$.  When each $A_i \subset \ZZ$, checking the existence of competitive equilibrium is exactly the SUBSET-SUM problem, which is NP-complete.  However, in this case of $n=1$, competitive equilibrium exists for constant valuation if each $A_i$ contains all lattice points in its convex hull.  This is not true for $n \geq 2$.  For example, $A_1 = \{(0,0), (1,1)\}$ and $A_2 = \{(1,0), (0,1)\}$ contain all the lattice points in their respective convex hulls, but competitive equilibrium does not exist at $(1,1)$ under any valuation, including constant valuation.

Economists are greatly interested
in general conditions on $\{u^j\}$ (or $\{u^j, a\}$) for the existence of competitive equilibrium 
and algorithms for finding prices $p$ and the winner assignment at those prices. If there is one seller and $J-1$ buyers, in general, competitive equilibrium does not guarantee maximal profit for the seller. The seller may make a bigger revenue by selling fewer products, see Example~\ref{ex:max.profit}. Maximizing profit for the seller is a different problem from finding competitive equilibrium. Often profit maximization is the objective of combinatorial auctions \cite{de2003combinatorial}, and thus results in that literature do not transfer immediately.

\medskip
\begin{example}\label{ex:max.profit}
Suppose $n=1$ and  the seller has two objects of the same type to sell.  There are two agents, $A^1 = A^2 = \{0,1,2\}$ with valuations 
\begin{align*}
u^1(0) & =0, ~~u^1(1)=10,~~ u^1(2)=11,\\
u^2(0) & =0, ~~u^2(1)=2,~~u^2(2)=3.
\end{align*}
The competitive equilibrium prices are $\{p \mid 1 \leq p \leq 2\}$ where each agent buys one item,  so the maximal revenue under competitive equilibrium for the seller receives is $4$.  However, if the agent sets price $p = 10$, then agent one buys one object, agents two buys nothing. Not all items are sold, so competitive equilibrium does not hold at $(1,1)$, but the agent achieves a larger revenue of $10$.
\qed
\end{example}

\smallskip

We will demonstrate the failure of competitive equilibrium in two examples. Example \ref{ex:first} shows that if $A \subsetneq \conv_\Z(A)$, competitive equilibrium can be sure to fail regardless of the agents' valuations $u^j$'s. This is because the aggregated valuation $U$ is only finite on $A$ (and is negative infinity elsewhere), so competitive equilibrium does not hold for points outside of $A$. On the other hand, Example \ref{ex:1} shows that $A = \conv_\Z(A)$ is only necessary, but not sufficient, for competitive equilibrium to hold.

\medskip
\begin{example}\label{ex:first}
Let $A^1 = \{(0,0), (1,0)\}$, $A^2 = \{(0,0), (1,2)\}$. For any valuation functions $u^1, u^2$ with domains $A^1, A^2$ respectively, the domain of the aggregate valuation $U$ is the Minkowski sum $A = A^1 + A^2 = \{(0,0), (1,0), (1,2), (2,2)\}$. 
But $\conv_\Z(A)$ contains the point $(1,1)$ while $A$ does not. Since $(1,1)$ is not in the domain of~$U$, $U((1,1)) = -\infty$. Thus, competitive equilibrium on $\conv_\Z(A)$ would fail at the point $(1,1)$ for any valuations $u^1$ and $u^2$. 
\end{example}

\begin{figure}
\begin{center}
\includegraphics[scale=1]{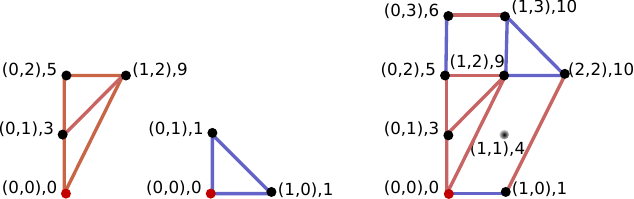}
\end{center}
\caption{A product-mix auction with two agents and two product types. From left to right: sets $A^1,A^2,A$, shown with valuations $u^1, u^2, U$ next to the corresponding bundles. The polyhedral complex is the regular subdivision $\Delta_{u^1}, \Delta_{u^2}, \Delta_U$, as explained in Example \ref{ex:subdiv}. Although the point $(1,1)$ is in the set $A_1+A_2$, there is no price at which the bundle $(1,1)$ is demanded, so competitive equilibrium fails at that point. See Example~\ref{ex:1}.}
\label{fig:subdivision}
\end{figure}

\begin{figure}
\begin{center}
\includegraphics[scale=1.2]{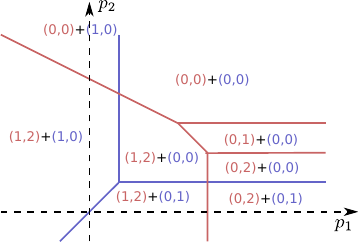}
\end{center}
\caption{The price space $\RR^2$ is partitioned according to the demand sets of the agents from Figure \ref{fig:subdivision}, see also Example~\ref{ex:1}. The regions are labeled with the demanded bundles from each of the two agents.  Note that none of the aggregate demands is equal to $(1,1)$.  If a price falls on the boundary of multiple regions, then there are more than one bundles in the demand set.  The boundary between regions is the negative of the union of tropical hypersurfaces; note the negative sign in~\eqref{eq:profit} compared to~\eqref{eq:tropPoly}.} 
\label{fig:prices}
\end{figure}

\medskip
\begin{example}
\label{ex:1}
Even when $A = \conv_\Z(A)$, competitive equilibrium may fail to exist. Suppose we have two types of goods and two agents, with valuation functions shown in Figure~\ref{fig:subdivision}.  Although the bundle $(1,1)$ is in $A$, the aggregate valuation at $(1,1)$ is so small that there is no price at which $(1,1)$ is in the aggregate demand set.  
\end{example} 

\section{Tropical mathematics.}
\label{sec:tropical}

We now restate the problem of determining competitive equilibrium in product-mix auctions in terms of tropical geometry. This formulation was first noted in the inspirational paper of Baldwin and Klemperer \cite{BaldwinKlemperer12}. For an extensive introduction to tropical geometry, see the monograph of Maclagan and Sturmfels \cite{maclagan2015introduction}. As a language, tropical geometry gives a clean statement and proof of the Unimodularity Theorem (cf. Theorem \ref{thm:bk}). Furthermore, it allows easy analysis of known families product-mix auctions for which competitive equilibrium is guaranteed to exist. We demonstrate such a family in Example \ref{ex:murota}.

Here we focus on the max-plus semiring $(\R\cup\{-\infty\}, \oplus,\odot)$, a set with tropical addition~$\oplus$ and tropical multiplication~$\odot$ such that
$$
r \oplus s := \max(r,s), \text{ and } r \odot s := r + s \mbox{ for all } r,s \in \RR\cup\{-\infty\}.
$$
For our purpose, any function $u : A \subset \ZZ^n \rightarrow \RR$ defines three important objects: the tropical Laurent polynomial $f_u$, the tropical hypersurface $T(f_u)$, and the regular subdivision $\Delta_u$ of $A$. If the function $u$ is the valuation function, then these three objects are also called the utility, the locus of indifference prices (LIP), and the demand complex  respectively in the economics literature \cite{BaldwinKlemperer}.

Let us elaborate. The \emph{tropical Laurent polynomial} $f_u$ with powers in $A \subset \ZZ^n$ and coefficients $u(a) \in \RR$ for $a \in A$ is
\begin{equation}
\label{eq:tropPoly}
f_u(x) = \bigoplus_{a \in A} u(a) \odot x^{\odot a} 
\end{equation}
which is equal to the function $x \mapsto \max \{u(a) + a \cdot x \mid a \in A\}$. 

The {\em tropical hypersurface} $T(f_u)$ of a tropical polynomial $f_u$ in $n$ variables is defined as the set of $x \in \R^n$ where the set $\arg\max \{u(a) + a \cdot x \mid a \in A\}$ has cardinality at least two. That is, the maximum the expression of $f_u(x)$ is achieved in at least two terms. In the usual algebra, $f_u: \RR^n \to \R$ is a piecewise-linear convex function, and $T(f_u)$ is the locus where the function $f(x)$ is not smooth. The regular subdivision $\Delta_u$ of $A$ induced by $u$ is defined as
$$\Delta_u := \{ \arg\max_{a \in A} [u(a) \odot x^{\odot a}]   \mid x \in \RR^n \},$$
which is a collection of subsets of $A$.
The elements of $\Delta_u$ are called {\em cells} of the subdivision.  The convex hull of a cell is called a {\em face}.  Note that a cell is a finite set of integer points while a face is a polytope. A point $a \in A$ is called a {\em marked point} or a {\em lifted point}  of the subdivision if it is contained in one of the cells. Every point in $A$ is contained in a face, but not necessarily in a cell, of the subdivision.

The regular subdivision $\Delta_u$ is {\em dual} to the tropical hypersurface $f_u$ in the sense that there is a natural bijection between positive dimensional faces of $\Delta_u$ and the faces of the tropical hypersurface, as follows.
Given the tropical polynomial $f_u$ as above, consider the following equivalence relation on $\RR^n$: two points $x, x' \in \RR^n$ are equivalent if \[\arg\max\{u(a) + a \cdot x \mid a \in A\} = \arg\max\{u(a) + a \cdot x' \mid a \in A\}.\]  This gives a partition of $\RR^n$ where each equivalence class is a relatively open convex polyhedron, defined by linear equations and strict linear inequalities.  The tropical hypersurface is the union of the equivalence classes of dimension $< n$ in this partition.  The correspondence \[x \leftrightarrow \arg\max\{u(a) + a \cdot x \mid a \in A\}\] is a bijection between the equivalence classes in $\RR^n$ and faces of $\Delta_u$.

When all the coefficients $u(a)$ of the tropical polynomial $f_u = \oplus_{a \in A} u(a) \odot x^{\odot a}$ are equal to $0$ (or any other fixed constant), then the tropical hypersurface depends only on the Newton polytope $P = \conv A$.  On the other hand, given an integral polytope $P$, let $f$ be a tropical polynomial with constant coefficients whose Newton polytope is $P$.  We define $T(P)$ to be the tropical hypersurface $T(f)$, which is a subfan of the normal fan of $P$ consisting of the normal cones to the faces of positive dimension in $P$.

The connection to product-mix auctions is clear upon comparison with the equations in the previous section. Plugging in $x = -p$, $f_u(-p)$ is the maximal utility of an agent with valuation $u$ at price~$p$, $-T(f_u)$ is the set of prices where the agent is indifferent between two bundles, and the cells in $\Delta_u$ are precisely all the possible demand sets $D_u(p)$ as we vary the price $p$, that is, $\Delta_u = \{D_u(p) \mid p \in \R^n\}$. More importantly, both aggregation over multiple agents and competitive equilibrium are simple to describe in tropical terms, as follows.

\begin{lemma}\label{lem:simple.operations}
Let $u^1, \dots, u^J$ be valuation functions of $J$ agents on supports $A^1,\dots,A^J \subset \ZZ^n$ respectively. Let $U$ be their aggregate valuation function on $A = \sum_{j=1}^J A^j$. The following hold
\begin{itemize}
  \item $f_U = f_{u^1} \odot \ldots \odot f_{u^J}$,
  \item $T(f_U) = \bigcup_{j=1}^J T(f_{u^j})$, and
  \item $D_U(p) = D_{u^1}(p) + \ldots + D_{u^J}(p)$ for any price vector $p \in \RR^n$
  \item Competitive equilibrium exists at a point $a \in \Z^n$ if and only if it is a marked point in $\Delta_U$.
\end{itemize}
\end{lemma}

\medskip
\begin{example}\label{ex:subdiv}
Revisit Example \ref{ex:1}. Figures~\ref{fig:subdivision} shows the regular subdivisions $\Delta_{u^1}$ on the far left, $\Delta_{u^2}$ in the middle, and the regular mixed subdivision $\Delta_U$ in the far right. Figure \ref{fig:prices} shows the negatives of the tropical hypersurfaces $T(f_{u^1})$ in blue and $T(f_{u^2})$ in red. Their set union is negative of the tropical hypersurface $T(f_U)$. The point $(1,1)$ is not marked, and thus competitive equilibrium fails. 
\end{example}

We give a geometric interpretation of when a given point in a regular subdivision $\Delta_u$ is marked (or lifted). Visualize a function $u: A \subset \Z^n \to \R$ as lifting each point $a \in A$ to a height $u(a)$ in a new dimension, producing the graph of the function $u$
\begin{equation}\label{eqn:lift.A}
\lift{A} := \{(a, u(a)) \mid a \in A \} \subset A \times \RR.
\end{equation}
Let $\widetilde{u}$ be the concave majorant of $u$, that is, the smallest concave function on $\conv(A)$ such that $\widetilde{u}(a) \geq u(a)$ for all $a \in A$.  Think of $\widetilde{u}$ as a piecewise-linear surface formed by extending a cling wrap over $\lift{A}$. The projection linear pieces of $\widetilde{u}$ onto $\conv(A)$ are the faces of $\Delta_u$, the regular subdivision of $A$ induced by $u$. A point $a \in \Z^n$ is {\em lifted} (or marked) if and only if $\widetilde{u}(a) = u(a)$, that is, it is literally lifted high enough by $u$ so that it touches the cling wrap $\widetilde{u}$. With this view, the following lemma is straightforward from the definitions. 
\begin{lemma}
\label{lem:CEconditions}
Let $u^1, \dots, u^J$ be valuation functions of $J$ agents on supports $A^1,\dots,A^J \subset \ZZ^n$ respectively.  Let $U$ be their aggregate valuation function on $A = \sum_{j=1}^J A^j$.  The following are equivalent
\begin{enumerate}
  \item Competitive equilibrium exists.
  \item $A = \conv_\Z(A)$ and $U = \widetilde{U}$ on $A$.
  \item For every $p \in -T(f_U)$, $D_U(p) = \conv_\Z(D_U(p))$.
  \item For every vertex $p \in -T(f_U)$, $D_U(p) = \conv_\Z(D_U(p))$.
\end{enumerate}
\end{lemma}
Computationally, the second condition says that checking competitive equilibrium involves evaluating $U$ and $\widetilde{U}$ at all points in $A$. The last condition says that this can also be done by checking a finite set of prices, namely, those which appear as the negative of the vertices of the tropical hypersurface $T(f_U)$. This is because such prices define support vectors for the inclusion-maximal faces of $\Delta_U$. If all lattice points in all maximal faces of $\Delta_U$ are marked, then all points of $A$ are marked in $\Delta_U$.  

We say that $u: A \subset \Z^n \to \R$ is {\em concave} if $A = \conv_\Z A$ and $u = \tilde{u}$ on $A$. We conclude with following Lemma \ref{lem:poly1}, a characterization for competitive equilibrium on a given face. This Lemma shows that competitive equilibrium is equivalent to checking whether Minkowski sum commutes with taking lattice points.  Danilov and Koshevoy used this in their proof of the Unimodularity Theorem \cite{danilov2004discrete, DKM}. Such questions also appear in connections to toric varieties~\cite{Toric}. We build on this link in Section \ref{sec:oda}, employing theorems from algebraic geometry to obtain new theorems and conjectures on competitive equilibrium. See Baldwin and Klemperer \cite[Theorem 5.16, Theorem 5.21]{BaldwinKlemperer} for other criterion for competitive equilibrium on a given face via lattice counting approaches.

\begin{lemma}\label{lem:poly1}
Let $u^1, \dots, u^J$ be concave valuation functions of $J$ agents, let $U$ be their aggregate valuation function. For $p \in \R^n$, competitive equilibrium holds at all $a \in D_U(p)$ if and only if
\begin{equation}\label{eqn:normal}
\conv_\Z(D_U(p)) = \conv_\Z(D_{u^1}(p)) + \ldots + \conv_\Z(D_{u^J}(p)).
\end{equation}
 \end{lemma}
\proof{Proof.}
By Lemma \ref{lem:simple.operations}, $D_U(p) = D_{u^1}(p) + \ldots + D_{u^J}(p)$. Since the valuations $u^j$ are concave, $\conv_\Z(D_{u^j}(p)) = D_{u^j}(p)$ for $j = 1, \ldots, J$. 
Thus, 
$$ D_U(p) = \conv_\Z(D_{u^j}(p)) + \ldots + \conv_\Z(D_{u^j}(p)). $$
By Lemma \ref{lem:CEconditions}, competitive equilibrium holds at all $a \in D_U(p)$ if and only if $\conv_\Z(D_U(p)) = D_U(p)$. Thus, such competitive equilibrium holds if and only if (\ref{eqn:normal}).
\hfill\qed
\endproof

\section{The Unimodularity Theorem.}
\label{sec:unimodularity}
We are now ready to state and prove the Unimodularity Theorem as formulated by Baldwin and Klemperer \cite{BaldwinKlemperer12} in 2012. Another version was first discovered Danilov, Koshevoy and Murota \cite[Theorem 4]{DKM} in 2001, motivated by Walrasian economies. Both proofs, presented in this section, rely on the same key result, Lemma \ref{lem:unimodular}, which itself was independently discovered by Howard \cite{howard} in 2007. We will present a different proof of the Unimodularity Theorem in \S\ref{sec:LPIP}.

A non-zero integer vector is called {\em primitive} if the greatest common divisor of its coordinates is~$1$. A set of vectors $\D \subset \ZZ^n$ is called {\em unimodular} if every linearly independent subset of $n$ vectors in $\D$ spans $\ZZ^n$ over $\Z$.

\smallskip
\begin{definition}
Let $\D$ be a set of primitive integer vectors in $\Z^n$ and $A \subset \Z^n$. We say that a valuation  $u: A \to \R$ is of {\em demand type} $\D$ if every edge of the subdivision $\Delta_u$ is parallel to a vector in $\D$. 
\end{definition}
\smallskip
 
In other words, $u$ is of demand type $\D$ if all the integer facet normals of $T(f_u)$ lie in $\D$. Since the tropical hypersurface $T(f_U)$ is the union of $T(f_{u^j})$ for all $j = 1,\dots,J$, it follows that the aggregate valuation $U$ is of demand type $\D$ if and only if $u^j$ is of demand type $\D$ for all $j=1 ,\dots,J$. 

\begin{theorem}[Unimodularity Theorem. \cite{BaldwinKlemperer, DKM}] \label{thm:bk}
A set $\D$ of primitive integer vectors is unimodular if and only if every collection of concave valuation functions $\{u^j \mid j = 1,\dots, J\}$ of demand type $\D$ has competitive equilibrium.
\end{theorem}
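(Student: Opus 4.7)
The plan is to use Lemma~\ref{lem:CEconditions} to reduce both directions to lattice-polytope properties of the mixed subdivision $\Delta_U$, and to handle the two implications by quite different arguments.

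For the \emph{necessity} direction I would argue by contrapositive. If $\mathcal{D}$ is not unimodular, pick $n$ linearly independent vectors $v_1,\dots,v_n \in \mathcal{D}$ with $|\det(v_1,\dots,v_n)| > 1$ and construct a bad instance with $J = n$ agents, $A^j = \{0, v_j\}$, and $u^j \equiv 0$. Each $u^j$ is concave (the segment $[0,v_j]$ has no interior lattice points since $v_j$ is primitive) and of demand type $\mathcal{D}$. Then $A = \sum_j A^j$ is exactly the $2^n$-vertex set of the parallelotope $P = \sum_j [0,v_j]$, which has volume $|\det(v_1,\dots,v_n)| > 1$ and therefore contains lattice points beyond its vertices. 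Hence $A \subsetneq \conv(A) \cap \ZZ^n$, and any missing lattice point is necessarily unmarked in $\Delta_U$, so Lemma~\ref{lem:CEconditions}(1) rules out competitive equilibrium there.

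For \emph{sufficiency}, I would fix concave utilities $u^j$ of demand type $\mathcal{D}$, with $\mathcal{D}$ unimodular, and verify condition~(3) of Lemma~\ref{lem:CEconditions}: at every vertex $p$ of $T_U$, show that $D_U(p) = \conv(D_U(p)) \cap \ZZ^n$. Since $D_U(p) = \sum_j D_{u^j}(p)$ and each summand is a cell of $\Delta_{u^j}$, each summand is an integer-complete lattice polytope (by concavity of $u^j$) whose edges lie in $\mathcal{D}$. Everything then reduces to the following lattice-polytope statement, which is the heart of the matter: if $P_1,\dots,P_J$ are integer-complete lattice polytopes with edges parallel to vectors in the unimodular set $\mathcal{D}$, then $\sum_j P_j$ is itself integer-complete, i.e.\ every lattice point of $\conv(\sum_j P_j)$ decomposes as $\sum_j a_j$ with $a_j \in P_j \cap \ZZ^n$. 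My strategy for this claim is to build a fine mixed subdivision of $\sum_j P_j$ by first triangulating each $P_j$ into simplices whose edges stay in $\mathcal{D}$. Because $\mathcal{D}$ is unimodular, any such simplex is automatically a unimodular simplex and has no non-vertex lattice points. Maximal cells of the mixed subdivision are Minkowski sums $\sum_j \sigma_j$ with $\sum_j \dim \sigma_j = n$; their $n$ independent edge directions are vectors in $\mathcal{D}$ forming a $\ZZ$-basis, and a direct check then shows every lattice point of such a mixed cell is a sum of vertices of the $\sigma_j$'s, and hence lies in $\sum_j (P_j \cap \ZZ^n)$.

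The main obstacle is this key step, specifically producing the triangulations of each $P_j$ with all edges remaining in $\mathcal{D}$. This rests on the fact from discrete convex analysis that a lattice polytope with edge directions in a unimodular set admits a coherent triangulation into unimodular simplices whose edges stay in that set — a standard result going back to Danilov--Koshevoy--Murota which I would either cite or establish by induction on dimension, slicing $P_j$ by a hyperplane orthogonal to some $v \in \mathcal{D}$ and recursing. Once that ingredient is in place, combining the Minkowski-sum argument above with Lemma~\ref{lem:CEconditions}(3) completes the proof.
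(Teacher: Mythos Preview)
Your overall architecture---reduce via Lemma~\ref{lem:CEconditions} to the lattice-point additivity of Minkowski sums, and handle necessity by the parallelotope construction---matches the paper. The necessity argument is essentially identical to the paper's (the paper uses $r\le n$ vectors rather than exactly $n$, but with the stated definition of unimodularity your choice is fine).

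For sufficiency you and the paper both reduce to the same key fact (the paper's Lemma~\ref{lem:unimodular}): if $P_1,\dots,P_J$ are lattice polytopes with edges in a unimodular set $\mathcal{D}$, then $\bigl(\sum_j P_j\bigr)\cap\ZZ^n = \sum_j (P_j\cap\ZZ^n)$. The difference is in how this is proved. You propose first triangulating each $P_j$ into simplices with edges in $\mathcal{D}$ and then forming a fine mixed subdivision of the triangulated pieces. You correctly flag the triangulation step as the main obstacle, and it is a genuine gap: your ``slice by a hyperplane orthogonal to some $v\in\mathcal{D}$ and recurse'' does not work as stated, since cross-sections of a $\mathcal{D}$-polytope can acquire new edge directions outside $\mathcal{D}$, and the claim that every $\mathcal{D}$-polytope admits a $\mathcal{D}$-triangulation is not a triviality one can wave through.

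The paper bypasses this obstacle entirely. It takes a generic fine mixed subdivision of $P+Q$ directly (by translating one tropical hypersurface by a generic vector), so that every maximal cell has the form $\sigma+\tau$ where $\sigma$ and $\tau$ are \emph{faces} of $P$ and $Q$ with $\dim\sigma+\dim\tau=n$. Faces already have all their edges in $\mathcal{D}$; choosing $\dim\sigma$ independent edge directions of $\sigma$ and $\dim\tau$ of $\tau$ gives $n$ independent vectors in $\mathcal{D}$, hence a $\ZZ$-basis by unimodularity. In this basis $\sigma$ and $\tau$ sit in complementary coordinate subspaces, so $\sigma+\tau$ is a Cartesian product and $(\sigma+\tau)\cap\ZZ^n=(\sigma\cap\ZZ^n)+(\tau\cap\ZZ^n)$. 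This is exactly the ``direct check'' you describe at the end of your sketch---but applied to faces rather than simplices, so no pre-triangulation is needed. In short, the hardest step of your plan is avoidable: drop the triangulations and run your final paragraph with $\sigma_j$ a face of $P_j$.
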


To prove the above theorem, we first recall a lemma, which has appeared in \cite{BaldwinKlemperer, danilov2004discrete, howard}. For completeness we provide a proof inspired by \cite{BaldwinKlemperer}, which is different from previous proofs.

\begin{lemma} \label{lem:unimodular}
Suppose $\D$ is unimodular. If $P$ and $Q$ are lattice polytopes with edge directions in~$\D$, then $\conv_\Z(P + Q) = \conv_\Z(P) + \conv_\Z(Q).$ 
\end{lemma}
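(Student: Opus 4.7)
The inclusion $(P \cap \ZZ^n) + (Q \cap \ZZ^n) \subseteq (P+Q) \cap \ZZ^n$ is immediate, so the work is in the reverse containment. Given $v \in (P+Q) \cap \ZZ^n$, my plan is to produce an integral decomposition $v = p + q$ from a vertex of the (nonempty) polytope
$$ R \;:=\; P \cap (v - Q) \;\subseteq\; \RR^n. $$
Every $p \in R$ gives a decomposition $v = p + (v - p)$ with $p \in P$ and $v - p \in Q$, but only vertices of $R$ are combinatorially ``rigid'' enough that I can hope to force integrality from unimodularity.

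I would pick a vertex $p^*$ of $R$, set $q^* := v - p^*$, and take $F_P$ (respectively $F_Q$) to be the smallest face of $P$ (respectively $Q$) containing $p^*$ (respectively $q^*$). Let $V$ and $W$ denote the linear subspaces of $\RR^n$ parallel to $\mathrm{aff}(F_P)$ and $\mathrm{aff}(F_Q)$. The key geometric observation is that the vertex condition on $p^*$ forces $V \cap W = \{0\}$: if $0 \neq \delta \in V \cap W$ with $|\delta|$ small, then both $p^* + \delta \in \mathrm{relint}(F_P) \subseteq P$ and $v - (p^* + \delta) = q^* - \delta \in \mathrm{relint}(F_Q) \subseteq Q$, placing $p^*$ in the relative interior of a segment in $R$. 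Next, I would pick vertices $w_P$ of $F_P$ and $w_Q$ of $F_Q$; these are vertices of $P$ and $Q$ and hence lattice points. The edges of $F_P$ leaving $w_P$ are edges of $P$ lying in $F_P$ (faces of faces are faces), so they lie in $\D$ and span the tangent cone of $F_P$ at $w_P$, in particular linearly spanning $V$; extract a linear basis $S_V \subset \D$ of $V$ from them, and likewise $S_W \subset \D$ of $W$ from the edges of $F_Q$ at $w_Q$.

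To close the argument, note that since $V \cap W = \{0\}$ the union $S_V \cup S_W \subset \D$ is linearly independent. Unimodularity of $\D$ (after the harmless reduction to the case where $\D$ spans $\RR^n$, so that $S_V \cup S_W$ extends to a linearly independent $n$-subset of $\D$) implies that $S_V \cup S_W$ is a $\ZZ$-basis of $(V \oplus W) \cap \ZZ^n$. The lattice vector $v - w_P - w_Q = (p^* - w_P) + (q^* - w_Q)$ lies in $V \oplus W$, so it expands in this $\ZZ$-basis with integer coefficients; uniqueness of the $V \oplus W$ decomposition then forces $p^* - w_P \in V \cap \ZZ^n$ and $q^* - w_Q \in W \cap \ZZ^n$, so both $p^*$ and $q^*$ are integral, as desired. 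The two points I expect to warrant the most care are the vertex argument producing $V \cap W = \{0\}$ and the upgrade of the paper's ``$n$-subset'' form of unimodularity to the ``linearly independent subsets of any size'' form used in the last step; both are short once isolated, but they are the load-bearing observations.
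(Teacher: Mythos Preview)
Your argument is correct and takes a genuinely different route from the paper's.

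The paper proves the reverse inclusion by building a \emph{mixed subdivision} of $P+Q$: it perturbs the tropical hypersurface of $Q$ by a generic vector so that $T(P)$ and the translate meet transversely, and then locates the given lattice point $a$ in a maximal mixed cell $\sigma+\tau$ with $\dim\sigma+\dim\tau=n$, where $\sigma$ is a face of $P$ and $\tau$ a face of $Q$. Unimodularity then makes the primitive edge directions of $\sigma$ and $\tau$ into a lattice basis of $\ZZ^n$, so $\sigma+\tau$ is, after a change of coordinates, a Cartesian product, and $a$ splits as a sum of lattice points in $\sigma$ and $\tau$.

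Your proof replaces the mixed-subdivision machinery with the fiber polytope $R=P\cap(v-Q)$ and a vertex argument. The key transversality --- that the carrier faces $F_P$ and $F_Q$ have $V\cap W=\{0\}$ --- is exactly what the paper obtains from the genericity of its perturbation, but you extract it directly from the fact that $p^*$ is an extreme point of $R$. After that, both proofs finish the same way: linearly independent edge directions from $\D$ form (part of) a $\ZZ$-basis, so the $V\oplus W$ decomposition of the integer vector $v-w_P-w_Q$ has integral components.

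What each approach buys: yours is self-contained convex geometry, avoids tropical hypersurfaces and regular subdivisions entirely, and is arguably closer to how one would discover the result by hand. The paper's version dovetails with the tropical language used throughout and makes the duality with the price space visible. One minor remark: your parenthetical ``harmless reduction to the case where $\D$ spans $\RR^n$'' deserves a sentence of justification (pass to the sublattice $\operatorname{span}_\RR(\D)\cap\ZZ^n$, in which the $\D$-unimodularity hypothesis must be reinterpreted); the paper's proof silently makes the same full-rank assumption when it asserts $\dim\sigma+\dim\tau=n$, so this is not a defect of your argument relative to theirs.
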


\proof{Proof.}
Translating $P$ and $Q$ to contain the origin and replacing $\Z^n$ with $\Span_\R(P+Q) \cap \Z^n$ if necessary, 
we may assume without loss of generality that $\dim(P+Q) = n$.  
First consider the case when $P$ and $Q$ are contained in complementary affine spaces, that is, when
\begin{equation}\label{eqn:full.dim}
\dim(P+Q) = \dim(P) + \dim(Q).
\end{equation}
Let $B$ and $C$ be maximal linearly independent subsets of primitive edge directions of $P$ and $Q$ respectively.  Then $B \cup C$ is a basis of $\RR^n$ from the dimension assumption above.  Moreover it forms a $\ZZ$-basis for $\Z^n$ because $\D$ is unimodular.  With respect to this basis,  $P$ and $Q$ lie in complementary coordinate subspaces, so the Minkowski sum $P+Q$ coincides with the Cartesian product $P \times Q$, and we have the desired result. 

Now, suppose that (\ref{eqn:full.dim}) does not hold. Let $v \in \RR^n$ be a vector in general position such that the tropical hypersurfaces $T(P)$ and $T(Q) - v$ intersect transversely, i.e.\ two faces intersect only if the dimension of their Minkowski sum is equal to $n$.  Then the union $T(P) \cup (T(Q) - v)$ is dual to a subdivision of $P+Q$ whose faces have the form $\sigma + \tau$ where $\sigma$ and $\tau$ are faces of $P$ and $Q$ with complementary dimensions.  See Figure~\ref{fig:perturb}.
  
Let $a$ be any lattice point in $P+Q$.  Then $a$ belongs to one such face $\sigma+\tau$. Since $\sigma$ are $\tau$ are faces of $P$ and $Q$, their edge directions belong to $\D$ as well.  The same argument as above gives $\sigma+\tau = \sigma \times \tau$ and hence $(\sigma+\tau) \cap \ZZ^n = (\sigma \cap \ZZ^n) + (\tau \cap \ZZ^n)$, which gives $a \in  (P \cap \ZZ^n) + (Q \cap \ZZ^n)$.
\hfill\qed
\endproof

\begin{figure}
\begin{center}
\includegraphics[scale=0.5]{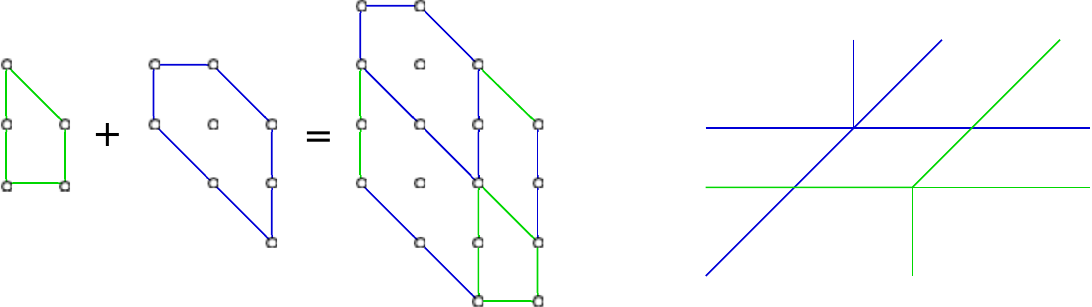}
\end{center}
\caption{The Minkowski sum of two polytopes can be subdivided without introducing new edge directions so that each maximal face is a Minkowski sum of faces of of complementary dimensions.  This can be done, for instance, by translating one of the tropical hypersurfaces in a generic direction and taking the dual subdivision.}
\label{fig:perturb}
\end{figure}

\medskip

\proof{Proof of the Unimodularity Theorem (Theorem \ref{thm:bk}).}
Suppose the set $\D$ is unimodular and that all valuations \mbox{$u^j : A^j \rightarrow \RR$} are concave of demand type $\D$. For each $p \in \R^n$, $\conv(D_{u^j}(p))$ are lattice polytopes with edge directions in $\mathcal{D}$. By Lemma \ref{lem:poly1} and \ref{lem:unimodular}, competitive equilibrium holds for all points in $D_U(p)$. Thus, competitive equilibrium holds by part 4 of Lemma \ref{lem:CEconditions}. For the converse, suppose $\D$ is not unimodular. We need to construct a product-mix auction of demand type $\D$ without competitive equilibrium. We will construct one where the valuations are trivial, that is, zero on their supports and $-\infty$ elsewhere. By Lemma \ref{lem:poly1}, it is sufficient to show that there exists $A^1, \ldots, A^r \subset \Z^n$ with primitive edges in $\D$, $A^j = \conv_\Z(A^j)$, and such that $\sum_jA^j \subsetneq \conv_\Z(\sum_jA^j)$. Since $\D$ is not unimodular, there exist linearly independent vectors $v_1, \ldots, v_n \in \D$ where $\Span_\ZZ\{v_1,\ldots,v_n\} \subsetneq \Z^n$. Let $A^i = \{ \mathbf{0}, v_j \}$.  Then $A^1 + \cdots + A^n$ consists of the vertices of the parallelopiped with edges $v_1, \dots, v_n$, but $\conv_\Z(A^1 + \cdots + A^n)$ consists of other lattice points because of non-unimodularity.
\qed
\endproof

\medskip
\begin{example}[Gross substitutes and $M^\natural$-concave valuations] \label{ex:murota}
In~\cite{KelsoCrawford} Kelso and Crawford introduced the {\em gross substitute} condition for set functions and showed that competitive equilibrium exists under this condition. To explain this property, let us consider a special case. Suppose the set of available bundles for each agent is $\{0,1\}^n$, that is, there is only one item of each type for sale.  A valuation $u$ has the gross substitute property if increasing the price of some items does not decrease the demand of other items. That is, for any price vectors $p \leq q$ and a bundle $a \in D_p(u)$, there is a bundle $b \in D_q(u)$ such that $\{i \in a : p_i = q_i\} \subset b$ where the bundles $a$ and $b$ are considered as subsets of $[n]$.  This implies that the edge directions in $\Delta_u$ cannot have two positive (or two negative) entries.  So the primitive edge directions are in directions $e_i$ or $e_i-e_j$ where $e_i$'s are standard unit vectors in $\ZZ^n$.  These vectors form the unimodular system $A_n^\natural$, which is the projection of the root system $A_n = \{\pm(e_i - e_j) \mid 1 \leq i < j \leq n+1\} \subset \ZZ^{n+1}$ along a coordinate direction. 

In \cite[\S4, \S11]{MurotaBook}, Murota defines $M^\natural$-concave functions, generalizing the gross substitute property to other sets of bundles in $\ZZ^n$. In particular, a concave function $u: A \to \R$, where $A \subset \Z^n$, $A =\conv_\Z(A)$, is $M^\natural$-concave if it can be extended to a concave function on $\conv(A)$, and the faces in the regular subdivision $\Delta_u$ are projections of {\em M-convex sets} along a coordinate direction. The M-convex sets are also known as {\em generalized permutohedra}~\cite{PRW} or {\em polymatroids}. This implies that the set of primitive edge direction of cells in $\Delta_u$ belong to $A_n^\natural$. Since $A_n^\natural$ is unimodular, by the Unimodularity Theorem, competitive equilibrium holds. See \cite[\S11]{MurotaBook} for further discussions on equivalent characterizations of gross substitutes and connections to the min-cost flow problem and algorithms.
\end{example}
\begin{comment}
\medskip
\begin{example}[Bipartite coalitions in TU-games] \label{ex:stable.matching}
Suppose we have $m$ actors each searching for the lead role, and $n$ production companies each looking for a lead actor. Each possible match $(i,j)$ of actor-company creates a movie with estimated gross $v((i,j))$, of which the actor and the company will split this money between themselves. An unmatched actor or company can make no movie, and thus get no money. We say that a matching of actor to company is stable if no actor-company pair would both benefit from breaking off with their assigned partners and form their own pair. Does there exist a stable matching? 
This is an instance of the stable matching with transferable utility problem, also called the Shapley--Shubik \cite{shapley1971assignment}, the assignment problem, or the TU-game. Baldwin and Klemperer \cite{BaldwinKlemperer} showed that existence of a stable matching is equivalent to the existence of competitive equilibrium at the all-one vector of a certain product-mix auction. In this case, this auction has product bundles in $\ZZ^m \times \ZZ^n$. Each movie is an agent, that is, it has valuation on the domain $\{\mathbf{0}, (e_i, e_j)\}$, where the value is $0$ at $\mathbf{0}$ and $v(i,j)$ at $(e_i, e_j)$. This product-mix auction has demand type $\mathcal{D} = \{(e_i,e_j) \mid 1\leq i \leq m , 1\leq j \leq n \}$. This set is unimodular.  Thus a stable matching with transferable utilities always exists. 
\end{example}
\end{comment}

\section{Competitive Equilibrium via Integer Programming.}\label{sec:LPIP}
In this section, we show that competitive equilibrium at a point exists if and only if the linear program (\ref{eqn:primal}) below has an integral optimal solution (Theorem \ref{thm:lp.ip.unimod}), or equivalently, that the linear program defined by (\ref{eqn:opt.prob2})-(\ref{eqn:partition}) has an integral optimal solution (Proposition \ref{prop:lp.ip}). 
The first program is new, and it makes explicit the connection to the unimodularity of the demand type $\mathcal{D}$. In particular, via the well-known unimodularity result in integer programming \cite[\S19]{schrijver1998theory}, the Unimodularity Theorem follows as a corollary of Theorem \ref{thm:lp.ip.unimod}. 
The second linear program follows from viewing product-mix auction as a multi-unit combinatorial auction. While it does not immediately yield a proof of the Unimodularity Theorem, it may be computationally and intuitively easier to work with. As an example, we consider the problem of profit maximization under the competitive equilibrium constraint using these linear programs.

\subsection{Integer programming proof of the Unimodularity Theorem}
First we give a sketch of the main ideas. Recall that we lift points in $A \subset \ZZ^n$ to $A \times \RR$ by a height function $U: A \to \R$, then take the concave majorant $\widetilde{U}$, which is a function on $\conv(A)$. We say that a point $a \in A$ is {\em lifted} if $\widetilde{U}(a) = U(a)$.  

Fix a point $a^\ast \in A$. Our goal is to characterize when $a^\ast$ is lifted. Regardless of whether it is lifted or not, $a^\ast$ belongs to some face of the regular subdivision $\Delta_U$ of $A$ induced by $U$. Order cells by set inclusion, say that $\sigma$ is smaller than $\tau$ if $\sigma \subseteq \tau$. Let $D^\ast$ be the smallest cell such that $a^\ast \in \conv(D^\ast)$. Then all vertices of $D^\ast$ are lifted to the same face of the graph of $\widetilde{U}$ in $\conv(A) \times \RR$. 

The point $a^\ast$ can be expressed, not necessarily uniquely, as some reference point $\widetilde{a} \in D^\ast$ plus a rational linear combination of the edge directions of the face $D^\ast$. The aggregate valuation function $U$ is defined to be the maximum among certain sums of individual valuation under decomposing $a^\ast$ into {\em integral} bundles, while $\widetilde{U}(a^\ast)$, which is affine linear on $\conv(D^\ast)$, is the maximum over some {\em rational} linear combinations. Whether $a^\ast$ is lifted can be phrased as whether a particular linear program has an integral solution. 

Furthermore, when the set of edges $\mathcal{D}$ is unimodular, the set of edges of the face $\conv(D^\ast)$ is also unimodular, and we can use this to show that the linear program is guaranteed to have an integral solution. In what follows we make the ideas above precise and also show the choice of reference point $\widetilde{a}$ does not matter, so we have one well-defined integer program and its linear relaxation.  

\medskip
Let $\partial D^\ast$ be the set of vertices of $\conv(D^\ast)$. Let $S(a^\ast)$ denote the set of price vectors that support $\conv(D^\ast)$, that is,
$$
S(a^\ast) = \{p \in \RR^n \mid \widetilde{U}(\widetilde{a}) - p \cdot \widetilde{a} \geq \widetilde{U}(a) - p \cdot a ~~ \text{ for all } \widetilde{a} \in \partial D^\ast, a \in A \}.
$$
For vertex $\widetilde{a} \in \partial D^\ast$, the singleton set $\{\widetilde{a}\}$ is a cell of $\Delta_U$.  Thus $\{\widetilde{a}\} = D_U(p) = D_{u^1}(p) + \cdots + D_{u^J}(p)$ for some $p \in \RR^n$, so $\widetilde{a}$ can be written uniquely as $\widetilde{a} = \sum_j \widetilde{a}^j$ where each $\widetilde{a}^j$ is a vertex of the subdivision $\Delta_{u^j}$ of $A^j$. Then $S(a^\ast)$ is the set of $p \in \R^n$ such that
\begin{equation}
\label{eqn:ineqs}
u^j(b^j) - p \cdot b^j \leq u^j(\widetilde{a}^j) - p \cdot \widetilde{a}^j \text{ for all } \widetilde{a} \in \partial D^\ast,  b^j \in A^j, \text{ for all } j = 1, \ldots, J.
\end{equation}
After rewriting, we get that 
$$ p \cdot (\widetilde{a}^j - b^j) \leq u^j(\widetilde{a}^j) - u^j(b^j)  \text{ for all } b^j \in A^j, j=1,\dots,J \mbox{ and for all } \widetilde{a} \in \partial D^\ast. $$
Suppose there is a total of $N$ such constraints on $p$. Let $V$ be the $n \times N$ matrix whose columns are the vectors $\widetilde{a}^j - b^j$ as above, taken over all $\widetilde{a}^j$ and $b^j$'s. Let $c \in \R^N$ be the vector with entries 
\begin{equation}
\label{eq:def.c} u^j(\widetilde{a}^j) - u^j(b^j)\end{equation}
corresponding to columns $\widetilde{a}^j - b^j$ of $V$ respectively. Then we can rewrite $S(a^\ast)$ as
$$ S(a^\ast) = \{p \in \RR^n \mid V^\top p \leq c\}. $$
Let $\widetilde a$ be any point in $\partial D^\ast$. Consider the following program in decision variable $p \in \R^n$:
\begin{align}
\text{maximize~~~~}& p \cdot (\widetilde{a} - a^*) & \label{eqn:dual} \tag{D} \\
\text{subject to~~~~}& V^\top p \leq c. \nonumber
\end{align}
It is the dual of the following primal linear program in decision variable $x \in \R^N$
\begin{align}
\text{minimize~~~~}& c^\top x & \label{eqn:primal} \tag{P} \\
\text{subject to~~~~}& Vx = \widetilde{a} - a^*, ~~ x \geq 0. \nonumber
\end{align}

\begin{theorem}\label{thm:lp.ip.unimod} Let $a^* \in A$ and $\widetilde a \in \partial D^\ast$. 
Competitive equilibrium for $\{u^j\}$ exists at $a^\ast$ if and only if the optimum of the linear program (\ref{eqn:primal}) over $x \in \R^N$ equals its optimum over $x \in \Z^N$.
\end{theorem}

We break the proof of Theorem \ref{thm:lp.ip.unimod} into two smaller lemmas. Lemma \ref{lem:prop8.1} says that the optimal solutions of the programs (\ref{eqn:dual}) and (\ref{eqn:primal}) are independent of the choice of $\widetilde{a}$. Lemma \ref{lem:prop8.2} says that the objective of (\ref{eqn:primal}) is minimized exactly when $a^\ast$ is lifted. 
\begin{lemma}\label{lem:prop8.1}
Every feasible solution is optimal in (\ref{eqn:dual}), with objective function value $U(\widetilde{a}) - \widetilde{U}(a^\ast)$.
\end{lemma}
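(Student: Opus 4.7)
The plan is to prove the stronger statement that the objective $p \cdot (\widetilde{a} - a^*)$ is \emph{constant} on the entire feasible region of (\ref{eqn:dual}), with common value $U(\widetilde{a}) - \widetilde{U}(a^*)$; optimality at every feasible point then follows trivially. First I would unpack the definitions of $V$ and $c$ to confirm that the constraint $V^\top p \le c$ is exactly the inequality system (\ref{eqn:ineqs}), so that the feasible set of (\ref{eqn:dual}) coincides with $S(a^*)$.

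The key observation is two-fold. On the one hand, because $D(a^*)$ is a cell of the regular subdivision $\Delta_U$, the concave majorant $\widetilde{U}$ restricts to an affine function on $\conv(D(a^*))$; this follows from (\ref{eqn:conv.lift.A}), since the graph of $\widetilde{U}$ over $\conv(D(a^*))$ is a face of the concave hull $\conv(\lift{A})$. On the other hand, fix any feasible $p \in S(a^*)$ and pick two vertices $\widehat{a}, \widehat{a}' \in \partial D(a^*)$ with their unique Minkowski decompositions $\widehat{a} = \sum_j \widehat{a}^j$ and $\widehat{a}' = \sum_j \widehat{a}'^j$. Applying (\ref{eqn:ineqs}) with $b^j = \widehat{a}'^j$ and summing over $j$ gives $p \cdot (\widehat{a} - \widehat{a}') \le U(\widehat{a}) - U(\widehat{a}') = \widetilde{U}(\widehat{a}) - \widetilde{U}(\widehat{a}')$, and swapping the roles of $\widehat{a}$ and $\widehat{a}'$ yields the reverse inequality. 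Hence $K(p) := \widetilde{U}(\widehat{a}) - p \cdot \widehat{a}$ is independent of the chosen vertex, and combined with affineness it extends to
$$\widetilde{U}(a) = p \cdot a + K(p) \quad \text{for every } a \in \conv(D(a^*)).$$

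Since both $\widetilde{a} \in \partial D(a^*)$ and $a^*$ lie in $\conv(D(a^*))$ by definition of $D(a^*)$, subtracting this identity at the two points yields $p \cdot (\widetilde{a} - a^*) = \widetilde{U}(\widetilde{a}) - \widetilde{U}(a^*)$; and because the vertex $\widetilde{a}$ is a marked point of $\Delta_U$ we have $\widetilde{U}(\widetilde{a}) = U(\widetilde{a})$, giving the claimed common value $U(\widetilde{a}) - \widetilde{U}(a^*)$, independent of $p$. The main obstacle I expect to write up carefully is the two-part key observation, i.e.\ the affineness of $\widetilde{U}$ on $\conv(D(a^*))$ paired with the constancy of $K(p)$ across the vertices of the cell; each is a standard fact about regular subdivisions but the paper has not recorded it as a separate lemma, so a clean argument should make the appeal to (\ref{eqn:conv.lift.A}) and to (\ref{eqn:ineqs}) explicit. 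A minor bookkeeping point to verify is that the columns of $V$ really do range over \emph{all} pairs $(\widehat{a}^j, b^j)$ appearing in (\ref{eqn:ineqs}), so that $V^\top p \le c$ captures the full defining system of $S(a^*)$ rather than a sub-system.
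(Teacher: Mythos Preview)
Your proposal is correct and follows essentially the same approach as the paper: both show that the objective $p\cdot(\widetilde{a}-a^*)$ is constant on $S(a^*)$ with value $U(\widetilde{a})-\widetilde{U}(a^*)$, using that $\widetilde{a}$ and $a^*$ lie in the same cell and that $\widetilde{a}$ is a marked point. The only difference is packaging: the paper phrases the constancy step geometrically (the vector $(-p,1)$ supports the lifted face $\lift{\conv(D(a^*))}$, so $\widetilde{U}(a)-p\cdot a$ is constant there), whereas you unpack the same fact algebraically by pairing the affineness of $\widetilde{U}$ on the cell with the two-sided use of (\ref{eqn:ineqs}) at vertices.
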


\medskip
\proof{Proof.}
Consider the graph of the function $\widetilde{U}$ on $\conv(A)$, which is a subset of $\conv(A) \times \RR$.  All points in the face $\conv(D^\ast)$ is lifted accordingly to an upper face the convex hull of the graph. For any $p \in S(a^\ast)$, the vector $(-p,1)$ supports this face. Since both $a^*$ and $\widetilde a$ belong to $\conv(D^\ast)$, we have 
$$\widetilde{U}(a^*) - p \cdot a^* =\widetilde{U}(\widetilde a) - p \cdot \widetilde{a}.$$
Then the value of the objective function in (\ref{eqn:dual}) is 
\begin{equation}
\label{eq:optValue}
p \cdot (\widetilde{a} - a^*) =  \widetilde{U}(\widetilde{a}) - \widetilde{U}(a^\ast) =  U(\widetilde{a}) - \widetilde{U}(a^\ast).\end{equation}
The second equality follows since $\widetilde{a}$ is a vertex, hence a marked point, in the subdivision $\Delta_U$.  Since the value $U(\widetilde{a}) - \widetilde{U}(a^\ast)$ does not depend on $p$, it follows that every feasible solution is optimal in (\ref{eqn:dual}).
\hfill\qed
\endproof

\begin{lemma}\label{lem:prop8.2}
Any feasible solution $x \in \Z^N$ of (\ref{eqn:primal}) satisfies 
\begin{equation}\label{eqn:objective.primal.x}
c^\top x \geq U(\widetilde{a}) - \widetilde{U}(a^*).
\end{equation}
Equality holds for some feasible $x \in \Z^N$ if and only if $U(a^*) = \widetilde{U}(a^*)$. 
\end{lemma}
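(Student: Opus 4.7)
The plan is to deduce the inequality from weak LP duality combined with Lemma~\ref{lem:prop8.1}, and to handle the two directions of the equality case separately: the easy direction ($\Leftarrow$) by an explicit construction, and the hard direction ($\Rightarrow$) by complementary slackness followed by a rewriting of the optimal integer solution.

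First I would observe that $S(a^*)$ is nonempty: $\conv(D(a^*))$ is a face of the graph of $\widetilde U$ and so admits a supporting hyperplane. Fix any $p \in S(a^*)$. For every column $v_l = \widehat a^{j_l}_l - b_l$ of $V$, the vertex $\widehat a^{j_l}_l$ of $D(a^*)$ is a marked point and therefore lies in $D_{u^{j_l}}(p)$; hence $c_l = u^{j_l}(\widehat a^{j_l}_l) - u^{j_l}(b_l) \geq p \cdot v_l$. Multiplying by $x_l \geq 0$, summing over $l$, and applying Lemma~\ref{lem:prop8.1} yields $c^\top x \geq p \cdot (\widetilde a - a^*) = U(\widetilde a) - \widetilde U(a^*)$, which is the claimed inequality.

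For the easy direction of equality, suppose $U(a^*) = \widetilde U(a^*)$. Then $a^*$ is a marked point of $\Delta_U$, so there exist $a^{*j} \in A^j$ with $\sum_j a^{*j} = a^*$ and $\sum_j u^j(a^{*j}) = U(a^*)$. Setting $x_{(\widetilde a,\,j,\,a^{*j})} := 1$ for each $j$ and $x_l := 0$ elsewhere yields a feasible integer $x$ with $Vx = \sum_j(\widetilde a^j - a^{*j}) = \widetilde a - a^*$ and $c^\top x = \sum_j\bigl(u^j(\widetilde a^j) - u^j(a^{*j})\bigr) = U(\widetilde a) - U(a^*) = U(\widetilde a) - \widetilde U(a^*)$.

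The converse is the main obstacle. Suppose an integer $x^*$ achieves equality. Then $x^*$ is LP-optimal (by Lemma~\ref{lem:prop8.1} and strong duality), so complementary slackness against the dual optimum $p$ forces $c_l = p \cdot v_l$ whenever $x^*_l > 0$, which is to say $b_l \in D_{u^{j_l}}(p)$. Setting $C_j := u^j(\widetilde a^j) - p \cdot \widetilde a^j = \max_{b \in A^j}\bigl(u^j(b) - p \cdot b\bigr)$, one checks $\widetilde U(a^*) = \sum_j C_j + p \cdot a^*$, so it suffices to produce integer $a^{*j} \in D_{u^j}(p)$ with $\sum_j a^{*j} = a^*$: then $\sum_j u^j(a^{*j}) = \sum_j C_j + p \cdot a^* = \widetilde U(a^*)$, forcing $U(a^*) = \widetilde U(a^*)$. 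To extract such a decomposition, I would rewrite $x^*$ as an equivalent integer optimum supported only on the ``canonical'' columns $(\widetilde a, j, b)$ with $b \in D_{u^j}(p)$ and satisfying $\sum_b x_{(\widetilde a,j,b)} = 1$ for each $j$; the $a^{*j}$ is then the unique $b$ with $x_{(\widetilde a,j,b)} = 1$. The rewriting exploits the vector identity $v_{(\widehat a, j, b)} = v_{(\widetilde a, j, b)} - v_{(\widetilde a, j, \widehat a^j)}$ (and the matching cost identity) together with the zero-cost, zero-vector loop columns $(\widetilde a, j, \widetilde a^j)$, which are used as buffer to keep all coordinates nonnegative while non-canonical columns are substituted out one at a time. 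This bookkeeping step is where the real work lies.
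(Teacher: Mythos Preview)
Your proof of the inequality via weak LP duality and Lemma~\ref{lem:prop8.1} is correct and in fact cleaner than the paper's route; the paper instead groups the columns of $Vx$ by agent to write $a^* = \sum_j(\widetilde a^j + v^j)$ (with $v^j := -\sum_{l:j_l=j}x_l(\widehat a_l^j-b_l)$) and then sums the inequalities
\[
u^j(\widetilde a^j + v^j) - p\cdot(\widetilde a^j + v^j) \ \le\ u^j(\widetilde a^j) - p\cdot\widetilde a^j,
\]
each of which holds because $\widetilde a^j \in D_{u^j}(p)$ (with the convention $u^j \equiv -\infty$ off $A^j$). Your construction for the $(\Leftarrow)$ direction is also fine.

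The $(\Rightarrow)$ direction is where your proposal has a real gap. After complementary slackness you know that every column with $x^*_l>0$ has both endpoints $\widehat a_l^{j_l}, b_l$ in $D_{u^{j_l}}(p)$; what you still need is a decomposition $a^* = \sum_j a^{*j}$ with each $a^{*j}\in D_{u^j}(p)$. Your vector/cost identities correctly let you move all mass onto columns of the form $(\widetilde a, j, b)$, and the buffer column $(\widetilde a, j, \widetilde a^j)$ lets you adjust the \emph{total} $\sum_b x_{(\widetilde a,j,b)}$ at will. But this does not yield a $0/1$ vector with exactly one nonzero entry per agent: the buffer cannot repair a negative entry at some $b\ne\widetilde a^j$ created by the substitution, and nothing in your argument forces the per-agent point $\widetilde a^j - \sum_{l:j_l=j}x^*_l(\widehat a_l^j-b_l)$ to land in $D_{u^j}(p)$, or even in $A^j$. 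So the ``bookkeeping step'' is not merely laborious; as sketched it does not go through.

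The paper sidesteps the rewriting entirely. It takes the per-agent point $\widetilde a^j + v^j$ \emph{as is} (possibly outside $A^j$, in which case $u^j(\widetilde a^j+v^j)=-\infty$) and observes that its chain for the inequality collapses to a single inequality---the sum over $j$ of the displayed inequality above. When $c^\top x$ equals the LP optimum $U(\widetilde a)-\widetilde U(a^*)$, the chain is forced to equality, hence each summand is zero, i.e.\ each $\widetilde a^j + v^j \in D_{u^j}(p)$. This \emph{deduces} that the per-agent points lie in the demand sets rather than trying to engineer this via rewriting $x$, and immediately gives $a^*=\sum_j(\widetilde a^j+v^j)\in D_U(p)$, hence $U(a^*)=\widetilde U(a^*)$.
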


\medskip
\proof{Proof.}
By Lemma \ref{lem:prop8.1}, we know that $U(\widetilde{a}) - \widetilde{U}(a^\ast)$ is equal to the optimal objective value $p \cdot (\widetilde{a} - a^\ast)$ of the primal problem (\ref{eqn:primal}). Weak linear programming duality \cite[\S 7]{schrijver1998theory} then implies (\ref{eqn:objective.primal.x}). 

It remains to show the condition for equality to occurs. Any feasible $x$ satisfies $\widetilde{a} - Vx= a^\ast$. Recall that columns of $V$ are vectors pointing from points in $A^j$ toward vertices of $\Delta_{u^j}$. Since $\widetilde{a}$ is a vertex of $D^\ast$, there exists a unique decomposition
$$\widetilde{a}=\widetilde{a}^1 + \dots + \widetilde{a}^J$$ 
such that $\widetilde{a}^j$ is a vertex of $D_{u^j}(p)$, for $p$ such that $D^\ast = D_U(p)$.  Hence any feasible solution $x \in \ZZ^n$ gives a way of writing
\begin{equation}\label{eqn:decomp.a}
a^* = (\widetilde{a}^1 + v^1) + \dots + (\widetilde{a}^J + v^J),
\end{equation} 
for some vectors $v^1, \ldots, v^J$ such that each $\widetilde{a}^j + v^j$ lies in the integer affine span of $A^j$.

Consider the lift of $\widetilde{a}^j + v^j$ to $\RR^n \times (\RR \cup \{-\infty\})$ by the valuation $u^j$.  If  $\widetilde{a}^j + v^j \notin A^j$, then it is lifted to $- \infty$. Since $u^j$ is concave and $(-p,1)$ supports the lift of $D^\ast$, we have
\begin{equation}\label{eqn:inequal.vj}
u^j(\widetilde{a}^j + v^j) - p \cdot (\widetilde{a}^j + v^j) \leq u^j(\widetilde{a}^j) - p \cdot \widetilde{a}^j
\end{equation}
with equality if and only if $\widetilde{a}^j + v^j \in D_{u^j}(p)$.

Summing over all $j$'s, we get
\begin{equation}
\label{eq:sum}
\sum_{j=1}^J u^j(\widetilde{a}^j + v^j) - p \cdot a^* \leq U(\widetilde{a}) - p \cdot \widetilde{a}.
\end{equation}
Hence
\begin{equation}
\label{eq:objective.primal.x.proof}
 U(\widetilde{a}) - \widetilde{U}(a^*) \stackrel{\eqref{eq:optValue}}{=} p \cdot (\widetilde{a} - a^*) \stackrel{\eqref{eq:sum}}{\leq} U(\widetilde{a}) - \sum_{j=1}^J u^j(\widetilde{a}^j + v^j) = \sum_{j=1}^J\left(u^j(\widetilde{a}^j) - u^j(\widetilde{a}^j + v^j)\right) = c^\top x,
\end{equation}
where $x$ gives the coefficients in writing the $v^j$'s as sums of columns in $V$; see \eqref{eqn:decomp.a}.  The last equality in \eqref{eq:objective.primal.x.proof} then follows from the definition of the vector $c$ in \eqref{eq:def.c}.

Equality is achieved in \eqref{eqn:objective.primal.x} and \eqref{eq:objective.primal.x.proof} for some feasible $x \in \Z^N$ if and only if equality is achieved in (\ref{eqn:inequal.vj}) for every $j = 1, \ldots, J$. This happens if and only if the decomposition of $a^*$ in (\ref{eqn:decomp.a}) by such $x$ satisfies $\widetilde{a}^j + v^j \in D_{u^j}(p)$ for some $p \in S(a^*)$, for all $j = 1, \ldots, J$. But this is the definition of $a^\ast \in D_U(p)$. So, assuming that (\ref{eqn:primal}) is feasible, equality occurs in (\ref{eqn:objective.primal.x}) if and only if $U(a^\ast) = \widetilde{U}(a^\ast)$. 
\hfill\qed
\endproof

\medskip
\proof{Proof of Theorem \ref{thm:lp.ip.unimod}.}
Consider Lemma~\ref{lem:prop8.1}. By the linear programming duality, the optimal value of the linear program (\ref{eqn:primal}) over $x \in \R^N$ is also $U(\widetilde{a}) - \widetilde{U}(a^\ast)$. If (\ref{eqn:primal}) is feasible over $x \in \Z^N$, then Lemma~\ref{lem:prop8.2} implies the desired statement. 
If (\ref{eqn:primal}) is not feasible over $x \in \Z^N$, then $a^*$ is not in $A$, so there is no competitive equilibrium at $a^\ast$, and the linear and integer program (\ref{eqn:primal}) do not agree. Thus the desired statement also holds true in this case.
\hfill\qed
\endproof

\begin{corollary}
Theorem \ref{thm:lp.ip.unimod} implies the Unimodularity Theorem. 
\end{corollary}
\proof{Proof. } Suppose that all the valuation $u^j$ are concave and of unimodular demand type.  Then any point $b^j \in A^j$ can be reached from any vertex $\widetilde{a}^j$ of the subdivision $\Delta_{u^j}$ by walking along the primitive edge directions of the subdivision because of the unimodularity of the set of directions.  Thus in (\ref{eqn:dual}) and (\ref{eqn:primal}), we can remove from $V$ and $c$ all the columns except those corresponding to the primitive edge directions of the subdivisions.  We are then left with a unimodular matrix $V$ which gives the same optima for the linear and integer programs as before.
A well known result in integer programming states that (\ref{eqn:primal}) always has integral optimal solutions (see \cite[\S19]{schrijver1998theory}), so by Theorem~\ref{thm:lp.ip.unimod} competitive equilibrium exists for all points $a^\ast \in A$.  This is precisely the difficult direction of the Unimodularity Theorem. 
\hfill\qed
\endproof
\medskip

\subsection{Weighted set packing and the Cayley trick.}\label{sec:lpip.cayley}

Let us fix a supply bundle $a^\ast$ and assume that $\mathbf 0 \in A^j$ for every $j$.   Consider the following set packing problem where the decision variables are $y(a,j)$ for $j = 1,\dots,J$ and $a \in A^j$.
\begin{align}
\mbox{maximize~~~~} & \sum_{j=1}^J \sum_{a \in A^j} y(a,j) u^j(a) \label{eqn:opt.prob2}  \\
\mbox{subject to~~~~}& y(a,j) \geq 0 \text{ for all }j=1,\dots,J \text{ and } a \in A^j \label{eqn:assign.something} \\
 & \sum_{a \in A^j} y(a,j) = 1 \text{ for all }j=1,\dots,J \label{eqn:assign.one}  \\
 & \sum_{j=1}^J \sum_{a \in A^j} y(a,j) a = a^\ast \label{eqn:partition}
\end{align}
For a bundle $a \in \ZZ^n$,  $y(a,j) = 1$ means that we assign the bundle $a$ to agent $j$. The constraints (\ref{eqn:assign.something}), (\ref{eqn:assign.one}) and (\ref{eqn:partition}) state, respectively, that each agent $j$ is assigned a non-negative fraction of bundles, that the total mass of bundles they receive is one, and that a total of $a^\ast$ items is assigned to all agents together. The objective (\ref{eqn:opt.prob2}) asks for an assignment that maximizes the total valuation. 

\begin{proposition}\label{prop:lp.ip}
Competitive equilibrium for $\{u^j\}$ exists at $a^\ast$ if and only if there exists an integral optional solution to the linear program defined by (\ref{eqn:opt.prob2})-(\ref{eqn:partition}), that is, an optimal solution for which all $y(a,j)$ are integers.
\end{proposition}

\medskip
\proof{Proof.}
By definition (\ref{eqn:U}), the aggregated valuation $U(a^\ast)$ is the optimum for the \emph{integer program}~(\ref{eqn:opt.prob2}). Let $\widetilde{U}$ be the concave majorant of $U$ and $\widetilde{u}^j$ be the concave majorant of $u^j$ for $j = 1, \ldots, J$. Note that 
\begin{equation}\label{eqn:tilde.U}
\widetilde{U}(a^\ast) = \max\left\{\sum_{j=1}^J \widetilde{u}^j(a^j): a^j \in \conv(A^j)\text{ and }\ \sum_{j \in J} a^j = a^\ast \right\}. 
\end{equation}
Since each $u^j$ is concave, $u^j = \widetilde{u}^j$ on each $A^j$. Since $\conv(A) = \conv(A^1) + \ldots + \conv(A^J)$, for any \mbox{$a^\ast \in \conv(A) \cap \Z^n$} the concave majorant $\widetilde{U}(a^\ast)$ is the optimum for the \emph{linear program}~(\ref{eqn:opt.prob2}). Existence of competitive equilibrium at $a^\ast$ means that $U(a^\ast) = \widetilde{U}(a^\ast)$, so we have the statement of the proposition.\hfill\qed
\endproof

The linear program (\ref{eqn:opt.prob2})-(\ref{eqn:partition}) can be rephrased in terms of the ``Cayley trick'' as follows.  The {\em Cayley configuration} of the collection $A^1,\dots, A^J \subset \ZZ^n$ is a point configuration in $\ZZ^J \times \ZZ^n$ consisting of the points $$\Cay(A^1, \dots, A^J) = (\{e_1\} \times A^1) \cup \cdots \cup (\{e_J\} \times A^J).$$  
The Cayley trick says that there is a natural bijection between mixed subdivisions of the Minkowski sum $A = \sum_iA_i$ and subdivisions of the Cayley configuration $\Cay(A^1, \dots, A^J)$ \cite{sturmfels1994newton}. There are powerful theories and software for understanding the later \cite{jesus2010loera}, and thus this trick is often used to compute and understand mixed subdivisions. Let $C$ be the matrix whose columns are the points in $\Cay(A^1,\dots,A^J)$. Let $u$ be the vector of valuations $u^j(a)$ whose entries naturally correspond to columns of $C$.  Then  (\ref{eqn:opt.prob2})--(\ref{eqn:partition}) can be stated simply as
\begin{align}
\label{eqn:Cayley}
\text{maximize~~} & u \cdot y \\
\text{subject to~~} & y \geq 0 \text{ and } C y = \begin{pmatrix}\mathbf{1}\\a^*\end{pmatrix}
\end{align}
where $\mathbf{1}$ is the all-one vector in $\ZZ^J$. That is, competitive equilibrium is a special optimization question defined using Cayley configurations, which coincides with the weighted set packing problem.

The two linear programs in Theorem \ref{thm:lp.ip.unimod} and Proposition \ref{prop:lp.ip} are different ways to view competitive equilibrium. Depending on the problem, one formulation can be easier to work with. To illustrate, suppose competitive equilibrium exists at $a^\ast$. Suppose we want to find a price $p$ that gives competitive equilibrium at $a^\ast$ and also maximizes the profit $p\cdot a$ for the seller at the same time.

From the viewpoint of the programs (\ref{eqn:dual}) and (\ref{eqn:primal}), one needs to solve
\begin{align}
\text{maximize~~~~}& p \cdot a^\ast & \label{eqn:optim.price} \\
\text{subject to~~~~}& V^\top p \leq c.  \nonumber 
\end{align}
However, forming $V$ and $c$ requires one to find all the vertices $\widetilde{a}$ of $D^\ast$ and their decomposition into sums of vertices $\widetilde{a}^j$ in $\Delta_{u^j}$. In this case, the set packing view (\ref{eqn:opt.prob2})-(\ref{eqn:partition}) gives a more efficient formulation below. Similar questions were studied by Bikhchandani and Mamer in~\cite{BM97}. 

\begin{lemma}
Let $\left(y^\ast(a,j)\right)_{j=1,\dots,J; a \in A^j}$ be an optimal integral solution of (\ref{eqn:opt.prob2}). The equilibrium price $p \in \RR^n$ that maximizes profit for the seller is the solution to the following linear program
\begin{align}
\mbox{\emph{maximize}~~~~}& p \cdot a^\ast & \label{eqn:max.profit.lp} \\
\mbox{\emph{subject to}~~~~}& u^j(a) - p \cdot a \geq u^j(b) - p \cdot b, \nonumber \text{for every } j = 1, \ldots, J, \\
&\mbox{ every } a \in A^j \mbox{ with } y^\ast(a,j) = 1, \mbox{ and every } b \in A^j \mbox{ adjacent to } a \mbox{ in } \Delta_{u^j}. \nonumber
\end{align}
\end{lemma}

\medskip
\proof{Proof.}
Let $F$ be the set of feasible solutions of the linear program in the Lemma. We wish to show that $F = S(a^\ast) = \{p \in \RR^n \mid a^\ast \in D_U(p)\}$.  

If $p \in F$, then $\{a \in A^j \mid y^\ast(a,j) = 1\} \subset D_{u^j}(p)$, so $a^\ast \in D_U(p)$, since
$$a^\ast = \left(\sum_{a^j \in A_j : y^\ast(a^j,j) = 1}a^j\right) ~\in~ \sum_{j=1}^J D_{u^j}(p) = D_U(p).$$  This shows that $F \subseteq S(a^\ast)$.  

Now suppose that $p \in S(a^\ast)$. That is, there are $b^j \in D_{u^j}(p)$ such that $a^\ast = \sum_{j=1}^J b^j$.  For each $j$, let $a^j$ be the unique element of $A^j$ for which $y^\ast(a^j,j)=1$.  Since $b^j \in D_{u^j}(p)$, we have $u^j(b^j) - p \cdot b^j \geq u^j(a^j) - p \cdot a^j$, and
$$
U(a^\ast) - p \cdot a^\ast = \sum_{j=1}^J \left( u^j(b^j) - p \cdot b^j \right) \geq  \sum_{j=1}^J \left( u^j(a^j) - p \cdot a^j  \right) = U(a^\ast) - p\cdot a^\ast. 
$$
This shows that all the inequalities in~(\ref{eqn:ineqs}) are attained at equality, so $u^j(b^j) - p \cdot b^j = u^j(a^j) - p \cdot a^j$ for all $j$.  Thus $a^j \in D_{u^j}(p)$ for all $j$, and $p$ is a feasible solution of (\ref{eqn:max.profit.lp}).
\hfill\qed
\endproof

\subsection{Competitive equilibrium and subset sum.}\label{sec:polytime}
Our various formulations of competitive equilibrium result in several algorithms on verifying the existence of competitive equilibrium at a given point $a^\ast$. Without further assumptions, they all involve the subset sum problem, and are thus all NP-complete. 

There are different layers of difficulties. Firstly, for an arbitrary bundle $a^\ast \in \ZZ^n$, verifying if \mbox{$a^\ast \in A = \sum_{j=1}^J A^j$} is a subset sum problem, even when there is only one type of good ($n = 1$).  
Secondly, suppose we know that $a^\ast \in A$. One may assume as in Proposition \ref{prop:lp.ip} that $u^j$ is concave and $\mathbf{0} \in A^j$ for every $j$. Then for $n = 1$, all agents have unimodular demand type, so competitive equilibrium trivially holds for $n = 1$. For $n > 1$, however, the Unimodularity Theorem may not apply. By Lemma \ref{lem:CEconditions}, competitive equilibrium at $a^\ast$ holds if and only if $U(a^\ast) = \tilde{U}(a^\ast)$ where $U$ is the aggregate valuation and $\tilde{U}$ is its concave majorant.
 Computing $U(a^\ast)$ is again a subset sum problem. This is spelled out in Proposition \ref{prop:lp.ip}. Knowing that $a^\ast \in A$ means the linear program \eqref{eqn:opt.prob2} has a feasible real solution over $\R$, but this does not guarantee an integral solution. 

Alternatively, suppose we know one price $p \in \R^n$ such that $a^\ast \in D_U(p)$. 
As the individual valuations $\{u^j\}$ are concave, one can compute $D_{u^j}(p)$. One can then setup the integer program \eqref{eqn:primal}, whose solution guarantees competitive equilibrium by Theorem \ref{thm:lp.ip.unimod}. The program \eqref{eqn:primal} asks whether one can write $a^\ast$ as $\tilde{a}$ plus an integral combination from a given set of vectors $V$. It is yet another instance of subset sum. 

Under the assumption that $a^\ast \in A$, the Unimodularity Theorem replaces the subset sum problems in \eqref{eqn:opt.prob2} and \eqref{eqn:primal} by a sufficient, easy-to-check condition (namely, unimodularity of a set of vectors) for competitive equilibrium to hold at $a^\ast$. 

The Subgroup Indices Theorem of Baldwin and Klemperer \cite[Theorem 5.16]{BaldwinKlemperer} give an alternative criterion when the tropical hypersurfaces $\{T_{u^j}\}$ have transverse intersections. This means the values $u^j(a)$ for $a \in A^j$ are sufficiently general, so that all intersections of the tropical hypersurfaces $\mathcal{T}(f_{u^j})$ locally looks like transverse intersection of affine spaces. This can always be achieved by adding a small real number to each value $u^j(a)$, for example. Assuming that one knows the face $D_U(p)$ and its decomposition as Minkowski sums of $D_{u^j}(p)$ for $j=1,\ldots,J$, transverse intersection implies that all but at most $n$ of the faces $D_{u^1}(p), \ldots, D_{u^J}(p)$ are vertices, say, $D_{u^j}(p) = \{a^j\}$ for $j = n+1, \ldots, J$. This effectively reduces the dimension of the subset sum problem, from decomposing $a^\ast$ into a sum of $J$ terms to that of at most $n$ terms. See \cite{BaldwinKlemperer} and discussions therein on sufficient conditions for competitive equilibrium in this setting. From a computational viewpoint, transverse intersections alone do not guarantee competitive equilibrium at some given $a^\ast$. Instead, it guarantees that for fixed $n$, competitive equilibrium can be jointly checked at all points in time polynomial in $J$. 

\section{Stable auctions and the Oda conjecture}\label{sec:oda}
\subsection{Stable auctions and competitive equilibrium for $n=2$}
Consider a product-mix auction with $J$ agents and $n$ product types. Let us partition the agents into disjoint, non-empty subsets, and compute the product-mix auction on each one. What properties of these subset auctions would guarantee that the original auction has competitive equilibrium? Such divide-and-conquer conditions are naturally attractive, both theoretically and computationally. In this section, we give such a sufficient condition when $n=2$ (Theorem \ref{thm:n2}). The analogous statement in higher dimensions is equivalent to the Oda conjecture in toric geometry, discussed in Section \ref{sec:conjecture}. 

Suppose all the subset auctions individually have competitive equilibrium. This condition is clearly not sufficient to guarantee competitive equilibrium overall. For instance, when each subset contains exactly one agent, the existence of competitive equilibrium in each subset is equivalent to individual valuations being concave. However, there are product-mix auctions with concave individual valuations but without competitive equilibrium. Counter-intuitively, we show that individual competitive equilibrium in the subset auctions is not even necessary. That is, our conditions in Theorem \ref{thm:n2} allow for competitive equilibrium to \emph{fail} for some of the subset auctions, yet they guarantee that the joint auction has competitive equilibrium. 

Theorem \ref{thm:n2} is complementary to results in \cite{BaldwinKlemperer}, which focus on transverse intersections of tropical hypersurfaces. In a way, our setup is as far as possible from transverse intersections: the hypothesis implies that the tropical hypersurface for a subset of the agents contains the union of the rest. 

\begin{definition}[Stable auction]\label{defn:stable}
Consider a product-mix auction with $J \geq 1$ agents with concave valuations, and $n \geq 1$ product types. Partition the agents into $K \leq J$ disjoint, non-empty sets, and run the product-mix auction separately on each set. Let $U^k$ be the aggregate valuation  of the $k$-th set of agents, for $1 \leq k \leq K$. We call such a partition \emph{stable} if the first subset auction has competitive equilibrium, and
\begin{equation}\label{eqn:t.subset}
T(f_{U^k}) \subseteq T(f_{U^1}) \mbox{ for all } k = 1, \ldots, K.
\end{equation}
We say that a product-mix auction is \emph{stable} if it has a stable partition.
\qed
\end{definition}

Intuitively, the stable condition (\ref{eqn:t.subset}) says that the first subset auction is `rich' enough to capture much information about the original auction. Note that we require competitive equilibrium to hold only for the first subset auction. For the other subset auctions, competitive equilibrium is allowed to fail. Instead, we put restrictions on their tropical hypersurfaces via (\ref{eqn:t.subset}). There are two ways to interpret this condition. First, by Lemma \ref{lem:simple.operations}, 
$$T(f_U) = \bigcup_{j=1}^J T(f_{u^j}) = \bigcup_{k=1}^K T(f_{U^k}) = T(f(U^1)).$$
In other words, suppose one were to iteratively compute $T(f_U)$, the set of price indifferences, by adding in one agent at a time. Then (\ref{eqn:t.subset}) means that this set does not change after the first subset of agents are processed. That is, the locus of indifference prices stabilizes, hence the name stable auction. This notion of stability is  about union of tropical hypersurfaces, not to be confused with stable intersections in tropical geometry. 

One can also understand (\ref{eqn:t.subset}) in terms of the demand sets $D_{U^1}, \ldots, D_{U^K}$. For each price $p \in \R^2$, we can count how many product  bundles are demanded in each subset auction at this price, ignoring scalings by constant multiples. Lemma \ref{lem:d.supset} states that if for all prices $p$, this number does not go up after the first subset auction, then (\ref{eqn:t.subset}) holds.  

We now give a simple sufficient condition for the condition (\ref{eqn:t.subset}). The following definition generalizes the concept of a primitive edge direction.
\begin{definition}
For $P \subset \ZZ^n$, the primitive polytope of $P$, denoted $P_\downarrow$, is the smallest polytope $Q \subset \Z^n$ such that $c \cdot Q = P$ for some $c \in \mathbb{N}$.
\end{definition}

\begin{lemma}
\label{lem:d.supset}
The statement (\ref{eqn:t.subset}) is true if
\begin{equation}\label{eqn:d.supset}
|(D_{U^1}(p))_\downarrow| \geq |(D_{U^k}(p))_\downarrow| \mbox{ for all } k = 1,\ldots,K.
\end{equation}
That is, at any price $p \in \R^2$, there are at least as many primitive demanded good bundles in the first auction as those in the others. 
\end{lemma}
\proof{Proof.}
Note that for any polytope $P \subset \Z^n$, $|P| \geq |P_\downarrow| \geq 1$, and $|P_\downarrow| = 1$ if and only if $|P| = 1$. If $-p \in T(f_{U^k})$, then $|D_{U^k}(p)| \geq 2$, so $|(D_{U^k}(p))_\downarrow| \geq 2$. By (\ref{eqn:d.supset}), $|D_{U^1}(p)| \geq 2$. Thus $-p \in T(f_{U^1})$, so (\ref{eqn:t.subset}) holds.
\hfill \qed
\endproof

\begin{theorem}\label{thm:n2}
For $n = 2$ product types, a stable product-mix auction has competitive equilibrium. 
\end{theorem}
\proof{Proof.}
Consider a stable partition with $K$ subset auctions.
By Lemma~\ref{lem:simple.operations}, $T(f_U) = \bigcup_{k=1}^K T(f_{U^k}) = T(f_{U^1})$. Therefore, without loss of generality, one can fix the first auction, and assume that all the remaining auctions consist of a single agent only. Now, let $p$ be a vertex of $-T(f_U)$. By Lemma \ref{lem:CEconditions}, it is sufficient to show that competitive equilibrium holds for all points in $\conv_\Z D_U(p)$. Since each valuation $U^k$ is concave by assumption, by Lemma \ref{lem:poly1}, we need to show that
\begin{equation}\label{eqn:need.hnps}
\conv_\Z(D_{U^1}(p) + \ldots + D_{U^K}(p)) = \conv_\Z(D_{U^1}(p)) + \ldots + \conv_\Z(D_{U^K}(p)).
\end{equation}
As $T(f_U) = T(f_{U^1}) \supseteq T(f_{U^k})$, the normal fan of $D_{U^1}(p)$ refines the normal fan of $D_{U^k}(p)$ for all $k=1,\ldots,K$. Theorem 1 of \cite{HNPS} precisely states that this normal fan condition implies (\ref{eqn:need.hnps}). This concludes the proof.
\hfill \qed
\endproof

\begin{corollary}\label{cor:n2.stable}
For $n=2$ product types, if all agents have identical concave valuations, then competitive equilibrium exists for all relevant supply bundles, regardless of demand types.
\end{corollary}

The next example shows that a product-mix auction can be stable even though some subset of agents together do not have a competitive equilibrium.

\begin{example}\label{ex:n2.oda}
Suppose $n=2$ and $J=3$. Let $A^1 = \{(0,0), (1,2), (1,0), (1,1),(2,2)\}$, $A^2 = \{(0,0), (1,0)\}$, $A^3 = \{(0,0), (1,2)\}$. Let the valuations $u^1, u^2, u^3$ be zero on $A^1, A^2$ and $A^3$, respectively, and $-\infty$ elsewhere. Partition the agents into $K=2$ sets, the first consists of agent 1, the second consists of agents 2 and 3. Here $\mathcal{D} = \{(1,0), (1,2)\}$ is not a unimodular set, and indeed, competitive equilibrium fails at $(1,1)$ for the second subset auction. The first auction with just agent 1 has competitive equilibrium. One can verify that (\ref{eqn:d.supset}) holds. By Theorem \ref{thm:n2}, the auction with all three agents has competitive equilibrium. This is readily verified in Figure \ref{fig:n2.oda}. 
\end{example}

\begin{figure}
\begin{center}
\includegraphics[scale=0.8]{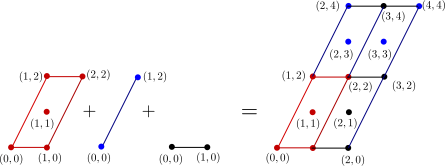}
\end{center}
\caption{The regular subdivisions of the three individual valuations (left) and the aggregate valuation (right). Figure accompanies Example~\ref{ex:n2.oda}, where a stable product-mix auction has competitive equilibrium for all but fails for a subset of the agents.}
\label{fig:n2.oda}
\end{figure}

\subsection{Identical valuations and the integer decomposition property}

The analogue of Corollary \ref{cor:n2.stable} fails for three or more product types.
That is, for $n \geq 3$, there exists product-mix auctions where all agents have identical concave valuations, yet competitive equilibrium fails. The immediate consequence is that the analogue of Theorem \ref{thm:n2} fails in dimension $\geq 3$. 
\begin{proposition}\label{ex:white.43}
For $n \geq 3$, there are auctions where
\begin{itemize}
  \item all the agents have identical concave valuations, and
  \item competitive equilibrium fails.
\end{itemize} 
In particular, for $n \geq 3$, there are stable auctions without competitive equilibrium.
\end{proposition}
\proof{Proof.} 
It is sufficient to do one explicit example for $n=3$.
Consider two agents $(J = 2)$ with three product types $(n = 3)$. Let
$$ A' := \{(0,0,0), (1,1,0), (1,0,1), (0,1,1)\}. $$
Note that $A' = \conv_\Z A'$. 
Let $u': \Z^3 \to \R$ be any function supported on $A'$, and $-\infty$ elsewhere. 
Let the two agents have identical valuations $u^1 = u^2 = u'$. As points in $A'$ are affinely independent, the corresponding subdivision $\Delta_{u'}$ is trivial. Let $A = A' + A' = \{a+b: a \in A', b \in A'\}$. 
The vertices of $A$ are the four points 
$$\{(0,0,0), (2,2,0), (2,0,2), (0,2,2)\}. $$
Let $a^\ast = (1,1,1)$. Then $a^\ast = \frac{1}{4}(0,0,0) + \frac{1}{4}(2,2,0) + \frac{1}{4}(2,0,2) + \frac{1}{4}(0,2,2)$, so $a^\ast \in \conv_{\ZZ}(A)$. However, $a^\ast \notin A$, so competitive equilibrium fails at~$a^\ast$.
\qed
\endproof \medskip

Let us consider what the Unimodularity Theorem has to say about the counterexample in the proof of Proposition \ref{ex:white.43}. Let $\mathcal{D}$ be the set of primitive edge directions of $A'$. This set is not unimodular: among other vectors, it contains $(0,1,1), (1,0,1), (1,1,0)$, whose span over $\Z$ does not contain $(1,1,1)$, for instance. The Unimodularity Theorem says that from $\mathcal{D}$, there exists \emph{some} product-mix auction of demand type $\mathcal{D}$ that fails competitive equilibrium. We exhibited such an example in its proof. The theorem does not imply that the particular case in the proof of Proposition \ref{ex:white.43} has to fail. 

The example above is an instance of a general question about integral polytopes. Suppose we fix an integral polytope $P \subset \RR^n$, and suppose all the $J$ agents have identical valuations, which are $0$ on $P \cap \ZZ^n$ and $-\infty$ else. The trivial partition with one agent in each subset is stable. Thus, our auction is stable. By Lemma~\ref{lem:poly1}, competitive equilibrium means
\begin{equation}\label{eqn:normal.all.equal}
(J \cdot P) \cap \ZZ^n = \underbrace{(P \cap \ZZ^n) + \cdots + (P\cap \ZZ^n)}_{J \text{ times }}
\end{equation}
A polytope $P$ that satisfies (\ref{eqn:normal.all.equal}) for all  $J \in \mathbb{N}$ is said to have \emph{IDP} (integer decomposition property). Polytopes with IDP are also called \emph{integrally closed} or \emph{normal}, but we will use the term IDP because it has the most consistent meaning across literature.  Corollary \ref{cor:n2.stable} says that for $n \leq 2$, all integral polytopes have IDP~\cite{HNPS}. This is no longer true for $n \geq 3$. The tetrahedron $A'$ in Example~\ref{ex:white.43} is an integral polytope that does not have IDP. However, a result of Bruns, Gubeladze, and Trung~\cite[Theorem~1.3.3]{BGT} says that for any integral polytope $P$ and any integer $c \geq \dim(P)-1$, the dilated polytope $cP$ has IDP.  See \cite[Theorem~1.1]{CHHH} for an easy proof. 
We can interpret this result in terms of product mix auctions as follows.

% Polytopes with IDP play an important role in toric geometry. 
% A major result on lattice polytopes is that of Knudsen, Mumford and Waterman \cite[page 34--36]{MR0335518}, which says that for each polytope $P \subset \Z^n$, there exists a $c \in \mathbb{N}$ such that the dilation $c \cdot P$ has a unimodular triangulation; which implies that $c \cdot P$ has IDP, see~\cite{SantosZiegler} and references therein.  

\begin{definition}[$c$-refinement]
For $c \in \mathbb{N}$, the $c$-refinement of a valuation $u: A \subset \Z^n \to \R$ is a function $\bar{u}: c \cdot A \subset \Z^n \to \R$, such that $\bar{u}$ is the concave majorant of function defined by $\bar{u}(c \cdot a) = c \cdot u(a)$ for all $a \in A$. The $c$-refinement of a product-mix auction with valuations $(u^1, \ldots, u^J)$ is one with valuations $(\bar{u}^1, \ldots, \bar{u}^J)$, where $\bar{u}^j$ is the $c$-refinement of $u^j$ for $j=1,\ldots,J$.
\end{definition}

\smallskip

Refinements reflects the effect of changing the base unit of in a product-mix auction. For instance, suppose that instead of selling bananas and apples in boxes of 10 each, one now sells them individually. Instead of querying the agents at the new points, one infers their values by discretizing the original valuation along a grid with width $1/10$, and then multiply by 10 to get back a product-mix auction. Doing this for all agents result in a product-mix auction in a finer unit, hence the name refinement. 

The $c$-refinement of a valuation function $u$ does not change the set of primitive edges in $\Delta_u$, and thus does not change its demand type. The $c$-refinements of an arbitrary auction can fail to have competitive equilibrium. The Bruns--Gubeladze--Trung theorem translates to a special class of stable auctions where there always exists $c$-refinements with competitive equilibrium.  This can be seen as a weak analogue of Corollary \ref{cor:n2.stable} in higher dimension. 

\begin{theorem}[\cite{BGT, CHHH}]
Consider an auction with $n$ product types, where all agents have identical concave valuations $u: A \to \R$ for some $A \subset \Z^n$. Then the $c$-refinement of this auction has competitive equilibrium for any $c \geq n-1$.  
\end{theorem}
In other words, in this situation competitive equilibrium exists if we are willing to subdivide each unit into $n-1$ pieces.  The result holds regardless of the demand types.

\smallskip

\proof{Proof.}
Let $\Delta_u$ denote the regular subdivision of $A$, the demand complex, for each agent. For $J$ agents, with the aggregate valuation $U$, each cell of the regular subdivision $\Delta_U$ has the form $\sigma + \cdots + \sigma$ ($J$~times) for some cell $\sigma$ of $\Delta_u$. Since $u$ is concave, $\conv_\Z(\sigma) = \sigma$. 
Then by Bruns--Gubeladze--Trung, for any integer $c \geq n-1$, the dilation $c \cdot \sigma$ has IDP for all cells $\sigma$ in $\Delta_U$.  This implies that
$J \cdot (c \cdot \sigma) = \conv_\Z(J \cdot c \cdot \sigma) \cap \Z^n$, as needed. So the refinement $\bar{U}$ has competitive equilibrium. 
\hfill\qed
\endproof
\medskip 

This theorem should be not be confused with the statement that the linear program in (\ref{eqn:primal}) has a rational solution. The later guarantees that to each fixed point $a \in \conv_\Z(A)$, one can find a $c_a \in \mathbb{N}$ such that the $c_a$-refined auction has competitive equilibrium at $a$. Applied over all points $a \in \conv_\Z(A)$, this says that for an appropriate $c^\ast \in \mathbb{N}$, the auction has competitive equilibrium for all points in $c^\ast \cdot \conv_\Z(A)$. However, this set can be strictly smaller than $\conv_\Z(c^\ast \cdot A)$, the support of the $c^\ast$-refinement of the auction. Therefore, the $c^\ast$-refined auction may not have competitive equilibrium. 

\subsection{Theorem \ref{thm:n2} in higher dimensions and the Oda Conjecture}
\label{sec:conjecture}
To generalize Theorem~\ref{thm:n2} or the more modest Corollary \ref{cor:n2.stable} to dimensions $n \geq 3$, we need sufficient conditions to ensure that polytopes have IDP. Unfortunately, except for some limited cases, there are no known simple criterion. The most famous conjecture in this direction is the \emph{smooth polytope conjecture}. A full-dimensional lattice polytope in $\RR^n$ is called {\em smooth} if at every vertex, the primitive integral edge directions are linearly independent and unimodular. Smoothness here does not refer to `roundness', but rather to the fact that the toric variety defined by such polytopes are smooth.  Up to unimodular transformations there are finitely many smooth polytopes in a given dimension containing a given number of lattice points, see \cite{smooth}.

\begin{conjecture}[Smooth Polytope Conjecture]
Smooth polytopes have IDP.
\end{conjecture}

In proving Theorem \ref{thm:n2}, we relied on \cite[Theorem 1]{HNPS}. Its counterpart in higher dimension, which generalizes the Smooth Polytope Conjecture, is the Oda Conjecture below. For further discussions and recent progress on this conjecture, see~\cite{lason2014non}.

\begin{conjecture}[Oda Conjecture]
In any dimension $n \geq 1$, equation (\ref{eqn:normal}) is true if one polytope $F_i$ is smooth and its normal fan refines the normal fan of $F_{j}$ for all $j=1,\dots,J$. 
\end{conjecture}

We restate the Oda conjecture in terms of the stable auction below. Obvious adaptation of the proof of Theorem \ref{thm:n2} shows these two conjectures are equivalent. In particular, any other sufficient condition for competitive equilibrium in stable product-mix auctions is equivalent to weaker forms of the Oda conjecture.

\begin{conjecture}[Oda Conjecture in terms of stable auctions]\label{conj:n3}
Suppose a product-mix auction with aggregate valuation $U$ is stable and that each maximal face in the regular subdivision~$\Delta_U$ is smooth. 
Then the product-mix auction has competitive equilibrium.
\end{conjecture}

\bibliographystyle{plain}
\bibliography{productMixAuction}

\begin{thebibliography}{10}

\bibitem{BaldwinKlemperer12}
Elizabeth Baldwin and Paul Klemperer.
\newblock Tropical geometry to analyse demand.
\newblock \url{http://www.nuff.ox.ac.uk/users/klemperer/Tropical.pdf}, 2012.

\bibitem{BaldwinKlemperer}
Elizabeth Baldwin and Paul Klemperer.
\newblock Understanding preferences: ``demand types'', and the existence of
  equilibrium with indivisibilities.
\newblock \url{http://www.nuff.ox.ac.uk/users/klemperer/demandtypes.pdf}, 2015.

\bibitem{BM97}
Sushil Bikhchandani and John~W. Mamer.
\newblock Competitive equilibrium in an exchange economy with indivisibilities.
\newblock {\em J. Econom. Theory}, 74(2):385--413, 1997.

\bibitem{smooth}
Tristram Bogart, Christian Haase, Milena Hering, Benjamin Lorenz, Benjamin
  Nill, Andreas Paffenholz, G\"unter Rote, Francisco Santos, and Hal Schenck.
\newblock Finitely many smooth {$d$}-polytopes with {$n$} lattice points.
\newblock {\em Israel J. Math.}, 207(1):301--329, 2015.

\bibitem{BGT}
Winfried Bruns, Joseph Gubeladze, and Ng\^o~Vi\^et Trung.
\newblock Normal polytopes, triangulations, and {K}oszul algebras.
\newblock {\em J. Reine Angew. Math.}, 485:123--160, 1997.

\bibitem{CHHH}
David~A. Cox, Christian Haase, Takayuki Hibi, and Akihiro Higashitani.
\newblock Integer decomposition property of dilated polytopes.
\newblock {\em Electron. J. Combin.}, 21(4):Paper 4.28, 17, 2014.

\bibitem{DKM}
Vladimir Danilov, Gleb Koshevoy, and Kazuo Murota.
\newblock Discrete convexity and equilibria in economies with indivisible goods
  and money.
\newblock {\em Math. Social Sci.}, 41(3):251--273, 2001.

\bibitem{danilov2004discrete}
Vladimir~I. Danilov and Gleb~A. Koshevoy.
\newblock Discrete convexity and unimodularity. {I}.
\newblock {\em Adv. Math.}, 189(2):301--324, 2004.

\bibitem{jesus2010loera}
Jes{\'u}s~A. De~Loera, J{\"o}rg Rambau, and Francisco Santos.
\newblock {\em Triangulations --- Structures for algorithms and applications},
  volume~25 of {\em Algorithms and Computation in Mathematics}.
\newblock Springer-Verlag, Berlin, 2010.

\bibitem{de2003combinatorial}
Sven de~Vries and Rakesh~V. Vohra.
\newblock Combinatorial auctions: a survey.
\newblock {\em INFORMS J. Comput.}, 15(3):284--309, 2003.

\bibitem{Toric}
Christian Haase, Takayuki Hibi, and Diane Maclagan, editors.
\newblock {\em Mini-Workshop: Projective normality of smooth toric varieties},
  volume~4.
\newblock Oberwolfach reports, 2007.

\bibitem{HNPS}
Christian Haase, Benjamin Nill, Andreas Paffenholz, and Francisco Santos.
\newblock Lattice points in {M}inkowski sums.
\newblock {\em Electron. J. Combin.}, 15(1):Note 11, 5, 2008.

\bibitem{howard}
Benjamin Howard.
\newblock Edge unimodular polytopes.
\newblock In Christian Haase, Takayuki Hibi, and Diane Maclagan, editors, {\em
  Mini-Workshop: Projective normality of smooth toric varieties}, volume~4,
  pages 2291--2293. Oberwolfach reports, 2007.

\bibitem{KelsoCrawford}
Alexander~S. Kelso, Jr. and Vincent~P. Crawford.
\newblock Job matching, coalition formation, and gross substitutes.
\newblock {\em Econometrica}, 50(6):1483--1504, 1982.

\bibitem{klemperer08}
Paul Klemperer.
\newblock A new auction for substitutes: Central bank liquidity auctions, the
  {U.S.\ TARP}, and variable product-mix auctions.
\newblock
  \url{http://www.nuffield.ox.ac.uk/users/klemperer/substsauc_NonConfidentialVersion3.pdf},
  2008.

\bibitem{klemperer2010product}
Paul Klemperer.
\newblock The product-mix auction: a new auction design for differentiated
  goods.
\newblock {\em Journal of the European Economic Association}, 8(2-3):526--536,
  2010.

\bibitem{lason2014non}
Michał Lasoń and Mateusz Michałek.
\newblock Non-normal very ample polytopes - constructions and examples.
\newblock {\em Experimental Mathematics}, 0(0):1--8, 2016.

\bibitem{maclagan2015introduction}
Diane Maclagan and Bernd Sturmfels.
\newblock {\em Introduction to tropical geometry}, volume 161 of {\em Graduate
  Studies in Mathematics}.
\newblock American Mathematical Society, Providence, RI, 2015.

\bibitem{MurotaBook}
Kazuo Murota.
\newblock {\em Discrete convex analysis}.
\newblock SIAM Monographs on Discrete Mathematics and Applications. Society for
  Industrial and Applied Mathematics (SIAM), Philadelphia, PA, 2003.

\bibitem{PRW}
Alex Postnikov, Victor Reiner, and Lauren Williams.
\newblock Faces of generalized permutohedra.
\newblock {\em Doc. Math.}, 13:207--273, 2008.

\bibitem{schrijver1998theory}
Alexander Schrijver.
\newblock {\em Theory of linear and integer programming}.
\newblock Wiley-Interscience Series in Discrete Mathematics. John Wiley \&
  Sons, Ltd., Chichester, 1986.
\newblock A Wiley-Interscience Publication.

\bibitem{sturmfels1994newton}
Bernd Sturmfels.
\newblock On the {N}ewton polytope of the resultant.
\newblock {\em J. Algebraic Combin.}, 3(2):207--236, 1994.

\end{thebibliography}

\end{document}